\theoremstyle{plain}
\newtheorem{theorem}{Theorem}
\newtheorem{lemma}[theorem]{Lemma}
\theoremstyle{definition}
\newtheorem{definition}[theorem]{Definition}
\theoremstyle{remark}
\newtheorem{remark}{Remark}
\newcommand{\e}{{\varepsilon}}
\newcommand{\bdiag}{{\rm bdiag}}
\newcommand{\diag}{{\rm diag}}
\begin{document}


\title{Power Allocation Algorithms for \\ Massive MIMO Systems with Multi-Antenna Users}

\author{
\name{Evgeny~Bobrov\textsuperscript{a,b}\thanks{
Emails: eugenbobrov@ya.ru, roborisor@gmail.com, kuznetsov.victor@huawei.com, minenkov.ds@gmail.com, d.yudakov43@gmail.com}, 
Boris~Chinyaev\textsuperscript{a,b},
Viktor~Kuznetsov\textsuperscript{b}, \\
Dmitrii~Minenkov\textsuperscript{a,c},
Daniil~Yudakov\textsuperscript{a,b}
}
\affil{\textsuperscript{a} M.V. Lomonosov Moscow State University;  \\
\textsuperscript{b} %
Huawei Technologies, Russian Research Institute, Moscow Research Center \\
\textsuperscript{c} A. Ishlinsky Institute for Problems in Mechanics RAS}}

\maketitle

\begin{abstract}
Modern 5G wireless cellular networks use massive multiple-input multiple-output (MIMO) technology. This concept entails using an antenna array at a base station to concurrently service many mobile devices that have several antennas on their side. In this field, a significant role is played by the precoding (beamforming) problem. During downlink, an important part of precoding is the power allocation problem that distributes power between transmitted symbols. In this paper, we consider the power allocation problem for a class of precodings that asymptotically work as regularized zero-forcing. Under some realistic assumptions, we simplify the spectral efficiency functional and obtain tractable expressions for it. We prove that equal power allocation provides optimum for the simplified functional with total power constraint (TPC). We propose low-complexity Intersection Methods (IM) that improve equal power allocation in the case of per-antenna power constraints (PAPC). On simulations using Quadriga, the proposed IM method in combination with widely-studied Water Filling (WF) shows a significant gain in spectral efficiency while using a similar computing time as the reference Equal Power (EP) solution. 
\end{abstract}

\begin{keywords}
5G, MIMO, Multi-antenna UE, Precoding, Regularized Zero-Forcing, Power Allocation, MMSE-IRC Detection, Constrained Optimization, Karush--Kuhn--Tucker conditions, Asymptotics

\end{keywords}

\section{Introduction}

The massive multiple-input multiple-output (MIMO) systems have attracted a lot of attention in both academia and industry since their first appearance~\cite{5G,marzetta2010noncooperative}. The main characteristic of the massive MIMO system is the large-scale antenna arrays at the cellular base station (BS). Using a large number of antennas, the massive MIMO system can exceed the achievable rate of a conventional MIMO~\cite{ge2016multi} system and simultaneously serves (with low power consumption) several users. 

A critical issue for improving the performance of wireless networks is the efficient management of available radio resources~\cite{le2007multihop}. Numerous works are dedicated to optimal allocation of the radio resources, for example, power and bandwidth to improve the performance of wireless networks~\cite{phan2009power}. 

An important part of signal processing in downlink is precoding since with this procedure we can focus transmission signal energy on smaller regions, which allows achieving greater spectral efficiency with lower transmitted power~\cite{EE}. Various linear precodings allow directing the maximum amount of energy to the user like Maximum Ratio Transmission (MRT) or completely get rid of inter-user interference like Zero-Forcing (ZF)~\cite{ZF_MRT, RZF2}. The precoding problem is well-studied (see e.g., overview~\cite{Survey2017,Survey2015, EZF19} and textbooks~\cite{Bjornson_tb_17, Tse_tb_05} and bibliography within), nonetheless there are open questions. 
For example, most of the works consider the total power constraint (TPC) (see e.g.,~\cite{boccardi2006optimum}), the more realistic per-antenna power constraints (PAPC) are much less studied (see e.g., ~\cite{yu2006uplink,bjornson2013optimal}).

An important component of the precoding procedure is the power allocation (PA) problem that is widely discussed in the literature. In~\cite{deng2005power}, by using either the signal-to-interference-and-noise ratio (SINR) or the outage probability as the performance criteria, different power allocation (PA) strategies are developed to exploit the knowledge of channel means. In~\cite{host2005capacity} bounds on the channel capacity are derived for a similar model with Rayleigh fading and channel state information (CSI). The power allocation problem in a three-node Gaussian orthogonal relay system is investigated in~\cite{liang2005gaussian} to maximize a lower bound on the capacity. Two power allocation schemes based on minimization of the outage probability are presented in~\cite{zhao2006improving} for the case when the information of the wireless channel responses or statistics is available at the transmitter. In~\cite{nguyen2011power} studies optimal power allocation schemes in a multi-relay cooperating network employing amplify-and-forward protocol with multiple source-destination pairs. The work~\cite{sanguinetti2018deep} advocates the use of deep learning to perform max-min and max-prod power allocation in the downlink of Massive MIMO networks. In~\cite{van2020joint} the total downlink power consumption at the access points is minimized, considering both to transmit powers and hardware dissipation.

The most relevant works to the current paper are of E.~Bj\"ornson et al. In~\cite[sec.~7.1]{Bjornson_tb_17} the case of single-antenna user equipment (UE) is studied in detail, targeting UE spectral efficiency and using multi-criteria optimization approach and Pareto front analysis. In~\cite[p.~328]{bjornson2013optimal} multi-antenna UEs are considered, but they are supposed to get only one data channel (or stream). The difficulty of the multi-antenna UE case is that the channels between different antennas of one UE are often spatial correlated~\cite{SpatialCorrelation}.
Therefore, the matrix of the user channel is ill-conditioned (or even has incomplete rank) thus one can not efficiently transmit data using the maximum number of streams. To solve this problem, instead of the full matrix of the user channel, vectors from its singular value decomposition (SVD) with the largest singular values are used for precoding~\cite{SVD}. 
When the number of streams (UE rank) is greater than one, it is necessary to consider the phenomenon of effective Signal-to-Interference-and-Noise-Ratio (effective SINR)~\cite{SINR_eff_model}. In~\cite{mohajer2022heterogeneous} a dynamic optimization model which maximizes the total energy efficiency along with satisfying the necessary QoS constraints is proposed. In~\cite{nikjoo2018novel} a novel approach to joint optimal power allocation and user association techniques in which cells are powered via a common grid network and alternative energy resources is suggested. In~\cite{mohajer2022energy} a dynamic optimization model to minimize the overall energy consumption of 5G heterogeneous networks is proposed.

In this paper, we study the problem of power allocation (PA) of MIMO wireless systems with users with multiple antennas and generalize the results of E.~Bj\"ornson et al. for the case of multi-antenna UEs with rank greater than one. We present the novel solutions to the PA problem that maximize network throughput in terms of spectral efficiency (SE) subject to either total or per-antenna power constraints. The original problem is not convex, but we managed to simplify it to a convex one with additional assumptions on the system model, e.g., applying a specific class of detection. Under some natural assumptions, we simplify the spectral efficiency functional and prove that the uniform power allocation provides its optimum subject to TPC. For the case of PAPC, we equivalently reformulate the optimization problem as the Lagrange system of equations and write down the Karush--Kuhn--Tucker conditions. Here, algorithmic solutions of PA problem are proposed assuming realistic PAPC.  

The simulation results based on Quadriga channel simulator~\cite{Quadriga} show the effectiveness of the proposed algorithmic approach in comparison with the reference PA schemes. To the authors’ best knowledge, these mathematical results are new.

The rest of this paper is organized as follows. After this Introduction, Section~\ref{sec:system_model} is devoted to the channel and system model where we introduce the downlink MIMO channel model, reference precoding methods, various detection schemes, and quality measures. In Section~\ref{sec:solution} we show a simplification of the PA problem, where we describe asymptotic diagonalization property of precoding matrices are used, proof of similarity of Conjugate and MMSE-IRC matrices, and Effective SINR models. In Section~\ref{sec:problem_solving} we consider the problem of the PA algorithm under TPC and PAPC assumptions, where we describe equal power allocation under the TPC, and the solution under the PAPC assumptions. We also consider problem-solving taking into account Modulation and Coding Scheme (MCS)~(\ref{sec:qam_solving}). The numerical algorithm description is presented in Section~(\ref{sec:algorithm}). Numerical experiments to compare considered algorithms are provided in Section~\ref{sec:results}. Algebraic notations and reference values are shown in Tab.~\ref{table_example}.

\begin{figure}
  \centering
  \includegraphics[width=0.7\linewidth]{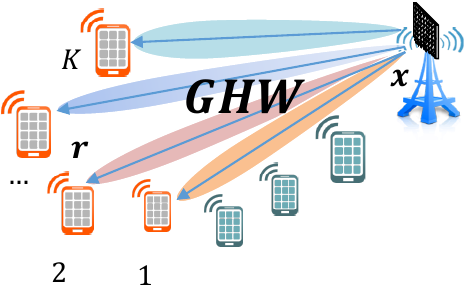} 
  \caption{Multi-User precoding allows transmitting of different information to different users simultaneously. Using the matrix $\bm W$ we can configure the amplitude and phase of the beams presented on the picture. The problem is to find the optimal precoding matrix $\bm W$ of the system given the target SE function~\eqref{Spectral Efficiency}.}
  \label{fig:system_example}
\end{figure}

\begin{table}
\small
\renewcommand{\arraystretch}{1.3}
\caption{Algebraic notations together with the reference values.}
\label{table_example}
\centering
\begin{tabular}{c c}
\hline
\bf Symbols & \bf Notations\\
\hline
$\bm H \in \mathbb{C}^{R \times T}, \bm W \in \mathbb{C}^{T \times L}, \bm G \in \mathbb{C}^{L \times R}$ & Channel, precoding and detection matrices\\
$\bm w_{n} \in \mathbb{C}^{T}$ & $n$-th column of matrix $\bm W$\\
$\bm h_{k} \in \mathbb{C}^T, \bm w^{k} \in \mathbb{C}^L$ & $k$-th row of matrices $\bm H, \bm W$\\
$h_{nm} \in \mathbb{C}, w_{nm} \in \mathbb{C}$ & $n,m$-th element of matrices $\bm H, \bm W$\\
$\bm S = \diag(s_1, \dots ,s_L) \in \mathbb{C}^{L \times L}$ & Diagonal matrix of singular values  \\
$K \; (=4)$ & the number of users\\
$T \; (=64)$ & the number of transmit antennas\\
$R \; (=16)$ & the total number of receive antennas\\
$R_k \; (=4)$ & the number of receive antennas for each user \\
$L \; (=8)$ & the total number of layers in the system\\
$L_k \; (=2)$ & the number of layers for each user \\
${\bf ()}^ \mathrm H$ & Complex conjugate operator\\
\hline
\end{tabular}
\end{table}


\section{Channel and System Model}\label{sec:system_model}

According to~\cite{Tse_tb_05,Bjornson_tb_17,Aitken,Zaidi_tb_18} we consider a MIMO broadcast channel. Symbol $\bm r \in \mathbb{C}^L $ is a \textit{received vector}, and $\bm x \in \mathbb{C}^L$ is a \textit{sent vector}, and $\bm H \in \mathbb{C}^{R \times T}$ is a \textit{channel matrix}, and $\bm W \in \mathbb{C}^{T \times L}$ is a \textit{Precoding matrix}, and $\bm G \in \mathbb{C}^{L \times R}$ is a block-diagonal \textit{detection matrix}, $\bm n \sim \mathcal{CN}(0, \sigma^2 I_{R}) $ is a \textit{noise-vector}, $\bm x \sim \mathcal{CN}(0, I_{L}) $ is a vector of sending symbols. Note that the linear precoding and detection are implemented by simple matrix multiplications. The constant $T$ is the number of transmit antennas, $R$ is the total number of receive antennas, and $L$ is the total number of transmitted symbols in the system. Usually, they are related as $L \leqslant R \leqslant T $. Each of the matrices $\bm G, \bm H, \bm W$ decomposes by $K$ \textit{users}, so please see the scheme in Fig.~\ref{fig:system_model}. The Multi-User MIMO model is described using the following linear system:
\begin{equation}
    \bm r = \bm G ( \bm H \bm W \bm x + \bm n ).
    \label{Basic received vector}
\end{equation}


In this paper, we make the following assumptions: i) that all users' channels are subject to uncorrelated Rayleigh fading, and ii) that the transmitter has perfect CSI of all downlink channels. This assumption is reasonable in time division duplex (TDD) systems because it enables the transmitter to use reciprocity to estimate the downlink channels. iii) that each user only has access to their own CSI. 

\begin{figure}
  \centering
  \includegraphics[width=\linewidth]{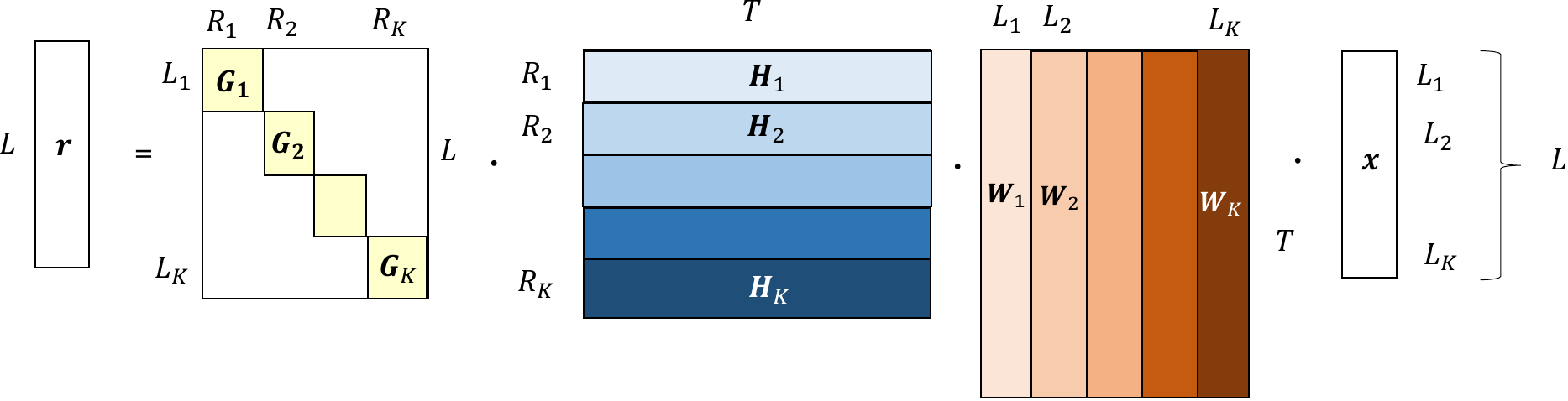}
  \caption{The linear system model  assumes only a linear transformation of the transmitting symbols.}
  \label{fig:system_model}
\end{figure}

\subsection{Singular Value Decomposition of the Channel}
The \textit{channel matrix} for user $k$, $\bm H_k \in \mathbb{C}^{R_k \times T}$ contains channel vectors $\bm h_i \in \mathbb{C}^{T}$ by rows. The \textit{path loss} diagonal matrix $\bm S_k \in \mathbb{R}^{R_k \times R_k}$ contains $R_k$ singular values $\sigma_{kn}$ in decreasing order along its main diagonal. It is convenient~\cite{SVD} to represent $\bm H_k $ via its Singular Value Decomposition (SVD): $\bm H_k = \bm U^\mathrm H_k \bm S_k \bm V_k$.



\begin{lemma}[Main Decomposition]\label{Main Decomposition}
\cite{Conjugate}~Denote $\bm H = [\bm H_1,\dots,\bm H_K] \in \mathbb C^{R\times T}$ the concatenation of individual channel rows $\bm H_k$. Similarly, $\bm U = \bdiag \{\bm U_1, \dots, \bm U_K\}$, $\bm S = \diag \{\bm S_1, \dots, \bm S_K\}$, $\bm V = [\bm V_1, \dots, \bm V_K]$.
Then, the decomposition exists (see Fig.~\ref{fig:main_decomposition}): $\bm H = \bm U^ \mathrm H \bm S \bm V$, where the $\bm H \in \mathbb{C}^{R \times T}$, 
and $\bm S = \diag(\bm S_k)\in \mathbb{C}^{R \times R}$,
and $\bm U = \bdiag (\bm U_k) \in \mathbb{C}^{R \times R}$ is \textit{block-diagonal unitary matrix},\ $\bm V = [\bm V_1,\dots,\bm V_K] \in \mathbb{C}^{R \times T}$ is the concatenation of corresponding UE singular vectors and $\bm C = \bm V \bm V^ \mathrm H - \bm I \neq \bm O$.
\end{lemma}
Lemma~\ref{Main Decomposition} means that by collecting all users together, we can write a specific \textit{channel matrix} decomposition~\cite{Conjugate}. Note, that such decomposition is not a convenient SVD of the channel matrix $\bm H$, and the matrix $\bm V$ is not unitary. But it consists of the $K$ SVDs of the size $R_k \times T$ and has block-diagonal unitary left matrix $\bm U$. We use this form in the construction of the optimal \textit{detection matrix} $\bm G$~\cite{SVD}.

\begin{figure}
  \centering
  \includegraphics[width=\linewidth]{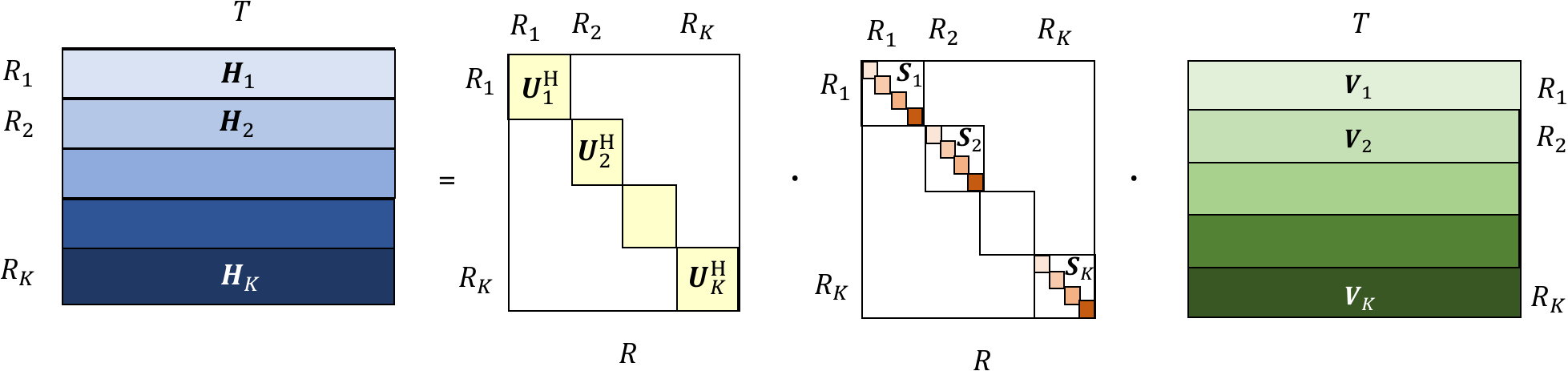}
  \caption{A graphical illustration of the Main Decomposition Lemma~\ref{Main Decomposition}.}
  \label{fig:main_decomposition}
\end{figure}

Usually, the transmitter sends to UE several layers and the number of layers (rank) is less than the number of UE antennas ($L_k \leqslant R_k$). In this case, it is natural to choose for transmission the first $L_k$ vectors from $\widetilde{\bm V}_k$ that correspond to the $L_k$ largest singular values from $\widetilde{\bm S}_k$. 
Denote by $\widetilde{\bm S}_k \in \mathbb{C}^{L_k \times L_k}$ the first $L_k$ largest singular values from $\bm S_k$, and by  $\widetilde{\bm U}_k^ \mathrm H \in \mathbb C^{R_k \times L_k}, \;  \widetilde{\bm V}_k \in \mathbb C^{L_k \times T}$ the first $L_k$ left and right singular vectors that correspond to $\widetilde{\bm S}_k$:
\begin{equation}
\widetilde{\bm S}_k = \diag\{s_{k,1}, \dots, s_{k,L_k}\}, \quad
\widetilde{\bm U}_k^ \mathrm H = (\bm u_{k,1}^ \mathrm H, \dots, \bm u_{k,L_k}^ \mathrm H), \quad
\widetilde{\bm V}_k = [\bm v_{k,1}; \dots; \bm v_{k,L_k}],
\end{equation}
i.e. ${\rm rank}  \widetilde{\bm V}_k = L_k \leqslant R_k = {\rm rank} \bm V_k$.
Numbers $L_k$ (and particular selection of $ \widetilde{\bm V}_k$) are defined during the Rank Adaptation problem that, along with Scheduler, is solved before precoding. 
For the Rank adaptation problem, we refer for example to~\cite{RankSelection} and in what follows we consider $L_k$, $ \widetilde{\bm V}_k$ already chosen.

\subsection{Precoding Matrices}\label{sec:precoding_class}

The \textit{precoding} matrix $\bm W$ is responsible for the beamforming from the base station to the users~\cite{VAE}. The linear methods for precoding do the following. Firstly, the linear solutions obtain singular value decomposition for each user $\bm H_k = \bm U_k^ \mathrm H \bm S_k \bm V_k \in \mathbb{C}^{R_k \times T}$ (Lemma~\ref{Main Decomposition}) and take the first $L_k$ singular vectors $ \widetilde{\bm V}_k \in \mathbb{C}^{L_k \times T}$ which attend to the first $L_k$ greatest singular values~\cite{SVD}. All these matrices are concatenated to the one matrix $\widetilde{\bm V} \in \mathbb{C}^{L \times T}$, which is used as the main building block of these precoding constructions. Finally, the precoding matrix is constructed from the obtained singular vectors. We describe linear methods for constructing a precoding matrix.

We are considering precoding matrices in the following form:
\begin{equation}\label{eq:precoding_power}
    \bm W=\bm W'\bm P, \qquad \bm W'=\bm W'(\widetilde{\bm V}),
\end{equation}
where $\widetilde{\bm V}$ is taken from the specific SVD decomposition from Lemma~\ref{Main Decomposition} and $\bm P$ is a diagonal matrix of power allocation.


Let us repeat some known precodings that are considered as initial solutions for studied power allocation methods.

The inter-user interference is vanished by the Zero-Forcing (ZF) precoding~\cite{ZF_MRT}:
\begin{equation}\label{Zero-Forcing}
    \bm W_{ZF} =  \widetilde{\bm V}^\dagger \bm P, \quad \widetilde{\bm V}^\dagger := \widetilde{\bm V}^ \mathrm H (\widetilde{\bm V} \widetilde{\bm V}^ \mathrm H)^{-1}
\end{equation}

It can be improved by using Regularized Zero-Forcing (RZF) precoding: 
\begin{equation}\label{RZF}
\bm W_{RZF}  =  \widetilde{\bm V}^ \mathrm H(\widetilde{\bm V} \widetilde{\bm V}^ \mathrm H +  \lambda \bm I )^{-1}\bm P,
\end{equation}
where the regularization parameter $\lambda = \frac{\sigma^2 L}{P} > 0$ depends on noise level and average path-losses~\cite{Joham_RZF}.

Further improvement is possible with diagonal regularization as in Adaptive Regularized Zero-Forcing (ARZF)~\cite{Conjugate} precoding (this idea was discussed in \cite{Bjornson_tb_17, nguyen2014mmse}, the following explicit heuristic formula for the MU MIMO case was proposed and studied in~\cite{Conjugate}):
\begin{equation}\label{ARZF}
\bm W_{ARZF}= \widetilde{\bm V}^ \mathrm H(\widetilde{\bm V} \widetilde{\bm V}^ \mathrm H + \lambda \bm S^{-2})^{-1} \bm P
\end{equation}
Detailed comparison of these algorithms and bibliography can be found in ~\cite{Conjugate}.

\subsection{Detection Matrices}\label{sec:conj_detection}

After precoding and transmission, on the side of UE $k$, we have to choose a  detection matrix $\bm G_k\in \mathbb C^{L_k\times R_k}$, which takes into account the rank of UE $L_k$. The way the UE performs detection strongly affects overall performance, and different detection algorithms require different optimal precoding matrices (see~\cite{Joham_RZF}, where precoding is chosen as a function of the detection matrix). The best way would be to consistently choose precoding and detection, but this is hardly possible due to the distributed nature of wireless communication. However, there are ideas on how to set up a precoding matrix, assuming a specific detection method on the UE side in the transmitter~\cite{PrecodingDetection}. We do not consider such an approach in our work, although it can be used to further improve our main proposal.

We assume the \textit{effective channel} matrix $ \bm A_k = \bm H_k \bm W_k $ to be calculated on the UE side. The Minimum Mean Square Error (\textit{MMSE}) detection for the user $ k $, where $\lambda \geqslant 0$ is the regularization value~\cite{MMSE, MMSE2}, performs as follows:
\begin{equation}\label{def:MMSE}
\bm G_k^{\textit{MMSE}}(\lambda) = (\bm A_k^ \mathrm H \bm A_k + \lambda \bm I) ^ {-1} \bm A_k ^ \mathrm H 
\end{equation}
In this paper, priority is given to the \textit{MMSE}-Interference-Rejection-Combiner (\textit{MMSE-IRC}) detection~\cite{IRC}:
\begin{equation}\label{def:IRC}
\bm G_k^{\textit{IRC}}(\lambda)  = \bm A_k ^ \mathrm H (\bm A_k^ \mathrm H \bm A_k + \bm R_{uu}^k + \lambda \bm I) ^ {-1}.
\end{equation}
And covariance matrix $\bm R_{uu}^k$ of total intra-user interference:
\begin{equation}
    \bm R_{uu}^k=\bm H_k(\bm W \bm W^ \mathrm H-\bm W_k \bm W_k^ \mathrm H)\bm H_k^ \mathrm H.
\end{equation}
To conduct analytical calculations, we assume virtual Conjugate Detection (\textit{CD}) in the following form~\cite{Conjugate}:
\begin{equation}\label{Conjugate Detection}
        \bm G^C_k = {\bm P_k}^{-1}\widetilde{\bm S}_k^{-1}  \widetilde{\bm U}_k = \bm P_k^{-1}  \widehat{\bm G}^C_k \in \mathbb C ^ {L_k \times R_k},
\end{equation}
where $\bm P_k$ is a corresponding to $k$-th user sub-matrix of matrix $\bm P$ in equation~\eqref{eq:precoding_power}.

\subsection{Quality Measures}\label{sec_QM}


We measure the quality of precoding using well-known functions such as Signal-to-Interference-and-Noise-Ratio (\textrm{SINR})~\cite{SINR} and Spectral Efficiency (\textrm{SE})~\cite{SE}. These functions are based not on the actual sending symbols $\bm x \in \mathbb C ^ {L\times 1}$, but some distribution of them~\cite{Bjornson}. Thus, we get the common function for all assumed symbols, which can be sent using the specified precoding matrix. We denote $\mathcal{L}_k$ as the set of symbols for $k$-th user. The \textrm{SINR} function is defined as:
\begin{equation}\label{Symbol SINR}
\textrm{SINR}_l(\bm W, \bm H_k, \bm g_l, \sigma^2) := \dfrac{|\bm g_l \bm H_k \bm w_l |^2}{\sum_{i \ne l}^{L} | \bm g_l \bm H_k \bm w_i |^2 + \sigma^2 \|\bm g_l\|^2}, \quad \forall l \in \mathcal{L}_k.
\end{equation}
For simulations of a physical layer (PHY) in multi-carrier and multi-layer OFDM systems an effective SINR mapping (ESM) model is used. Such model compresses the given set of SINRs experienced by the receiver over every sub-channel into a single scalar value (called effective SINR).
According to the paper~\cite{SINR_eff_model}, the \textit{effective} SINR for a user $k$ is calculated using the \textrm{SINR} at each layer of each Resource Block (RB) as follows. Functions $\beta = \beta(\textrm{MCS})$ and $\textrm{MCS} = \textrm{MCS}(\textrm{SINR}^{eff}_\beta)$ are table-defined (see, e.g., Table~\ref{tab:beta} for $\beta(MCS)$).  Assuming only one RB, we can define $\textrm{SINR}^{eff}_\beta$ as a self-consistent solution of the following system:
\begin {equation}
\textrm{SINR}_{\beta, k} ^  {eff} (\bm W, \bm H_k, \bm G_k, \sigma^2) = - \beta \ln
\left (
\frac {1} {L_k} \sum \limits_{l \in \mathcal{L}_k} \exp  \Big\{- \frac {\textrm{SINR}_ {l} (\bm W, \bm H_k, \bm g_l, \sigma^2) } {\beta}\Big\}
\right)
\label{SINR_eff}
%
\end {equation}

This model is called Exponential effective SINR mapping (EESM)
and the accuracy of EESM has been validated in several
studies~\cite{QAM64, brueninghaus2005link, 6678684}.
To get the \textrm{SE} function, we apply Shannon’s formula over all effective user SINRs~\eqref{User SINR}:
\begin{equation}
    \textrm{SE}(\bm W, \bm H, \bm G, \sigma^2) =  \sum_{k=1}^K L_k \log_2 (1 + \textrm{SINR}_{\beta, k}^{eff}(\bm W, \bm H_k, \bm G_k, \sigma^2)) \rightarrow \max\limits_{\bm W}.
    \label{Spectral Efficiency}
\end{equation}

\subsection{Problem Statement}

We consider the channel model in the form~\eqref{Basic received vector} that particularly means exact measurements of the channel. To further simplify the problem we suppose detection policy $\bm G = \bm G(\bm H, \bm W)$ to be a known function, moreover we assume Conjugate Detection~\eqref{Conjugate Detection} that simplifies the channel model to~\eqref{model_conj}. Based on this channel model, we calculate \textrm{SINR} of transmitted symbols by~\eqref{Symbol SINR} and effective \textrm{SINR} of UE, which can be approximately calculated by~\eqref{SINR_eff} and~\eqref{User SINR}. We denote the total power of the system as, $P$, \textcolor{black}{ assuming $P=1$ in the experiments.} 

The \textit{total power constraint} and the more realistic \textit{per-antenna power constraints} (see~\cite{Bjornson_tb_17}) impose the following conditions on the precoding matrix. Since case $\bm W=\bm W'(\bm V) \bm P$ is considered in this paper, conditions read:
\begin{equation}\label{power_antenna}
{\rm (a)} \;\; \|\bm W' \bm P\|^2 \leqslant P, \qquad \text{or} \qquad 
{\rm (b)} \;\; \|{\bm w'}^{t}{\bm p}\|^2 \leqslant P/T, \quad t=1, \dots ,T,
\end{equation}

where $\bm P=\diag(\bm p)= \diag\left(\sqrt{p_1} \dots \sqrt{p_L}\right) = \diag\left(\frac{\sqrt{\rho_1}}{\|{\bm w'}_1\|} \dots \frac{\sqrt{\rho_L}}{\|{\bm w'}_L\|}\right)$ is power allocation matrix and $P$ is total power of base station. {\bf The  goal is to find a \textcolor{black}{power allocation} matrix that maximizes \textrm{SE}~\eqref{Spectral Efficiency} given the power constraints}~\eqref{power_antenna}:
\begin{equation}\label{First formulation}
	\textrm{SE}(\bm P) = \textrm{SE}(\bm W' \bm P, \bm H, \bm G(\bm H, \bm W'\bm P), \sigma^2)   \rightarrow \max\limits_{\bm P}, \quad\text{subject to  (a) or (b).}
\end{equation}



\section{Simplifications of the Problem}\label{sec:solution}

\subsection{Asymptotic Diagonalization Property of Precoding}

\begin{definition}
Let us assume the case of small noise and denote $\lambda=\frac{\sigma^2}{P} \to 0$ and $\bm P > 0$ is some diagonal matrix. In real systems, Scheduler algorithms choose UE for pairing if this assumption is fulfilled. Define the property of {\bf asymptotic diagonalization of $\widetilde {\bm V}$ as $\lambda\to0$} of precoding matrix as follows:
\begin{equation}\label{IRC 1}
\widetilde{\bm V} \bm W = 
\left( \begin{array}{c}
	\widetilde{\bm V}_{1}\\
	\widetilde{\bm V}_{2}\\
	 \ldots \\
	\widetilde{\bm V}_{K}\\
	\end{array} \right)  \cdot  \left(\bm W_{1}, \bm W_{2} \ldots \bm W_{K} \right) =\bm P + \mathcal{O}(\lambda), \ \text{i.e.} \ \widetilde{\bm V} \bm W\sim \bm P, \text{ as } \lambda \rightarrow 0
\end{equation}
\end{definition}

Precoding algorithms: ZF~\eqref{Zero-Forcing}, RZF~\eqref{RZF}, and ARZF~\eqref{ARZF} satisfy the property~\eqref{IRC 1}. This can be easily shown with the Neumann series as in the following Lemma (it is similar to~\cite[Lemma~2]{Conjugate}).

\begin {lemma} \label{lemm inv}
Consider square invertible complex matrices $ \bm M $ and $ \bm N $ of the same size and rank. For any $ 0 <\lambda \ll 1 $ and $\det \bm M \neq 0$ the following matrix identity is true: $(\bm M + \lambda \bm N) ^ {- 1} = \bm M^{-1} - \lambda \bm M^{-1} \bm N \bm M^{-1} + \mathcal{O}(\lambda^2) = \bm M ^ {- 1} + \mathcal{O}(\lambda)$.
\begin{proof}

\begin{equation}
\bm F(\lambda) = (\bm M + \lambda \bm N)^{-1}, \text{ and } \bm F'(\lambda) = -(\bm M + \lambda \bm N)^{-1} \bm N (\bm M + \lambda \bm N)^{-1}
\end{equation}
\begin{equation}
\bm F(\lambda) = \bm F(0) + \bm F'(0)\lambda + \mathcal{O}(\lambda^2), \text{ where } \bm F(0) = \bm M^{-1}, \text{ and } \bm F'(0) = -\bm M^{-1} \bm N \bm M^{-1}
\end{equation}
\begin{equation}
(\bm M + \lambda \bm N) ^ {- 1} = \bm M^{-1} - \lambda \bm M^{-1} \bm N \bm M^{-1} + \mathcal{O}(\lambda^2) = \bm M ^ {- 1} + \mathcal{O}(\lambda)
\end{equation}
\end{proof}
\label{epslem}
\end {lemma}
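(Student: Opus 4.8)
The plan is to prove the first-order expansion
$(\bm M + \lambda \bm N)^{-1} = \bm M^{-1} - \lambda \bm M^{-1} \bm N \bm M^{-1} + \mathcal{O}(\lambda^2)$
by a Taylor (Neumann-series) argument and then absorb the higher-order terms to get the coarser estimate $\bm M^{-1} + \mathcal{O}(\lambda)$. I will treat the matrix-valued function $\bm F(\lambda) = (\bm M + \lambda \bm N)^{-1}$ and show it is analytic near $\lambda = 0$, so its Taylor expansion is legitimate.

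First I would establish that $\bm F$ is well-defined and smooth in a neighborhood of $\lambda = 0$. Since $\det \bm M \neq 0$, the matrix $\bm M$ is invertible; because the determinant is a continuous (indeed polynomial) function of the entries, $\det(\bm M + \lambda \bm N) \neq 0$ for all sufficiently small $|\lambda|$, so the inverse exists on some interval $(-\lambda_0, \lambda_0)$. On this interval the entries of $\bm F(\lambda)$ are rational functions of $\lambda$ (ratios of polynomials in $\lambda$ via Cramer's rule) with nonvanishing denominator, hence $\bm F$ is real-analytic, justifying differentiation and Taylor's theorem with remainder.

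Next I would compute the derivative. Differentiating the identity $(\bm M + \lambda \bm N)\,\bm F(\lambda) = \bm I$ with respect to $\lambda$ gives $\bm N \bm F(\lambda) + (\bm M + \lambda \bm N)\,\bm F'(\lambda) = \bm O$, so
$\bm F'(\lambda) = -(\bm M + \lambda \bm N)^{-1} \bm N (\bm M + \lambda \bm N)^{-1}$,
which is exactly the formula written in the statement. Evaluating at $\lambda = 0$ yields $\bm F(0) = \bm M^{-1}$ and $\bm F'(0) = -\bm M^{-1} \bm N \bm M^{-1}$. Taylor's theorem with the Lagrange (or integral) form of the remainder then gives
$\bm F(\lambda) = \bm F(0) + \lambda \bm F'(0) + \mathcal{O}(\lambda^2) = \bm M^{-1} - \lambda \bm M^{-1} \bm N \bm M^{-1} + \mathcal{O}(\lambda^2)$,
where the remainder is bounded (in any submultiplicative matrix norm) by $\tfrac{1}{2}\lambda^2 \sup_{|t|\le\lambda_0}\|\bm F''(t)\|$, a finite constant by analyticity on the closed subinterval. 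The coarser estimate $\bm M^{-1} + \mathcal{O}(\lambda)$ follows by merging the $-\lambda \bm M^{-1}\bm N\bm M^{-1}$ term and the $\mathcal{O}(\lambda^2)$ remainder into a single $\mathcal{O}(\lambda)$ term.

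An alternative and perhaps cleaner route avoids differentiation entirely: factor $(\bm M + \lambda \bm N) = \bm M(\bm I + \lambda \bm M^{-1}\bm N)$, so that $\bm F(\lambda) = (\bm I + \lambda \bm M^{-1}\bm N)^{-1}\bm M^{-1}$, and expand the first factor as a Neumann series $\sum_{j\ge0}(-\lambda)^j (\bm M^{-1}\bm N)^j$, which converges for $|\lambda| < 1/\|\bm M^{-1}\bm N\|$. Reading off the $j=0$ and $j=1$ terms reproduces the claimed first-order expansion, and the tail $\sum_{j\ge2}$ is manifestly $\mathcal{O}(\lambda^2)$. I expect the only genuine subtlety — and the step worth stating carefully — is the justification that the remainder is uniformly $\mathcal{O}(\lambda^2)$ rather than merely $o(\lambda)$; both the analyticity argument and the explicit geometric-tail bound settle this cleanly, so there is no serious obstacle, only bookkeeping about the radius of validity $|\lambda| < \lambda_0$ on which $\bm M + \lambda \bm N$ stays invertible.
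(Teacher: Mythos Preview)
Your argument is correct and matches the paper's proof essentially verbatim: define $\bm F(\lambda)=(\bm M+\lambda\bm N)^{-1}$, differentiate to get $\bm F'(\lambda)=-(\bm M+\lambda\bm N)^{-1}\bm N(\bm M+\lambda\bm N)^{-1}$, and read off the first-order Taylor expansion at $\lambda=0$. You add more justification (analyticity, remainder control) and an alternative Neumann-series derivation that the paper omits, but the core approach is identical.
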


For channel singular values $\widetilde{\bm V}$ such that the matrix $\widetilde{\bm V} \widetilde{\bm V}^ \mathrm H$ has a full rank, using Lemma~\ref{lemm inv} for the algorithms ZF~\eqref{Zero-Forcing}, RZF~\eqref{RZF} and ARZF~\eqref{ARZF} we obtain:
\begin{equation}
 \widetilde{\bm V}{\bm W}_{ZF} =  \widetilde{\bm V}\widetilde{\bm V}^ \mathrm H ( \widetilde{\bm V}  \widetilde{\bm V}^ \mathrm H)^{-1} \bm P = {\bm P}
\end{equation}
\begin{equation}
 \widetilde{\bm V}{\bm W'}_{RZF}
 = \widetilde{\bm V}\widetilde{\bm V}^ \mathrm H ( \widetilde{\bm V}  \widetilde{\bm V}^ \mathrm H + \lambda \bm I)^{-1} \bm P
 = {\bm P} + \mathcal{O}\left(\lambda\right) 
\end{equation}
\begin{equation}
 \widetilde{\bm V}{\bm W'}_{ARZF}
 = \widetilde{\bm V}\widetilde{\bm V}^ \mathrm H ( \widetilde{\bm V}  \widetilde{\bm V}^ \mathrm H + \lambda \bm S)^{-1} \bm P
 = {\bm P} + \mathcal{O}\left(\lambda\right) 
\end{equation}

Thus, precodings ZF~\eqref{Zero-Forcing}, RZF~\eqref{RZF}, and ARZF~\eqref{ARZF} satisfy property~\eqref{IRC 1}.

\begin{remark}
In this case, matrix $\bm P$ of definition~\eqref{IRC 1} coincides with matrix $\bm P$ of Conjugate Detection~\eqref{Conjugate Detection}. 
\end{remark}

\subsection{The Similarity of Conjugate Detection and MMSE-IRC}\label{sec:similarity_proof}

In this section, we prove the similarity of \textit{MMSE-IRC}~\eqref{def:IRC}~\cite{IRC} and \textit{Conjugate Detection} (\textit{CD})~\eqref{Conjugate Detection}~\cite{Conjugate}. Detection \textit{CD} does not depend on precoding and allows to significantly simplify the considered problem~\eqref{First formulation}. First, we prove some useful properties about \textit{CD} (compare with~\cite[Theorem~1]{Conjugate}).

\begin{lemma}\label{lemma_conj_det}
The detection matrix $\bm G$ is $\bm G^C$ (\textit{Conjugate Detection}) if and only if it satisfies the following property: 
\begin{equation}\label{IRC 4}
\bm G = \bm G^C \Leftrightarrow \bm G \bm H = {\bm P}^{-1} \widetilde{\bm V} \Leftrightarrow \forall k:
\bm G_{k}\bm H_{k}=\bm P_k^{-1}\widetilde{\bm V}_{k},
\end{equation}
where $\bm P$ is uniquely defined in~\eqref{IRC 1}, and the system model equation~\eqref{Basic received vector} takes the form
\begin{equation}\label{model_conj}
\bm r = \widetilde{\bm V} \bm W \bm x + \tilde {\bm n}, \quad 
\tilde {\bm n} := {\bm P}^{-1} \widetilde{\bm S}^{-1} \widetilde{\bm U} \bm n.
\end{equation}

\end{lemma}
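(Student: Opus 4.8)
The plan is to prove the biconditional chain in~\eqref{IRC 4} step by step, and then derive the simplified system model~\eqref{model_conj} as an immediate consequence of the definition of \textit{CD}. First I would establish the easy direction: if $\bm G = \bm G^C$, plug the definition~\eqref{Conjugate Detection}, $\bm G_k^C = \bm P_k^{-1}\widetilde{\bm S}_k^{-1}\widetilde{\bm U}_k$, into $\bm G_k \bm H_k$. Using the truncated SVD of the channel, $\bm H_k = \widetilde{\bm U}_k^ \mathrm H \widetilde{\bm S}_k \widetilde{\bm V}_k$ (the truncation to the first $L_k$ singular vectors, as introduced before Section~\ref{sec:precoding_class}), and the fact that $\widetilde{\bm U}_k \widetilde{\bm U}_k^ \mathrm H = \bm I_{L_k}$ because the columns of $\widetilde{\bm U}_k^ \mathrm H$ are orthonormal, one gets $\bm G_k^C \bm H_k = \bm P_k^{-1}\widetilde{\bm S}_k^{-1}(\widetilde{\bm U}_k\widetilde{\bm U}_k^ \mathrm H)\widetilde{\bm S}_k\widetilde{\bm V}_k = \bm P_k^{-1}\widetilde{\bm V}_k$. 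Stacking over $k$ and recalling that $\bm G$ and $\bm P$ are block-diagonal while $\widetilde{\bm V}$ is the row concatenation gives $\bm G\bm H = \bm P^{-1}\widetilde{\bm V}$, which is the middle equality; the rightmost (per-user) form is just the block restatement.

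For the converse I would argue that the per-user relation $\bm G_k \bm H_k = \bm P_k^{-1}\widetilde{\bm V}_k$ pins down $\bm G_k$ on the row space of $\bm H_k$. Writing $\bm H_k = \widetilde{\bm U}_k^ \mathrm H \widetilde{\bm S}_k \widetilde{\bm V}_k$ and right-multiplying the hypothesis by $\widetilde{\bm V}_k^ \mathrm H$ (noting $\widetilde{\bm V}_k\widetilde{\bm V}_k^ \mathrm H = \bm I_{L_k}$ since the truncated right singular vectors are orthonormal), we obtain $\bm G_k \widetilde{\bm U}_k^ \mathrm H \widetilde{\bm S}_k = \bm P_k^{-1}$, hence $\bm G_k \widetilde{\bm U}_k^ \mathrm H = \bm P_k^{-1}\widetilde{\bm S}_k^{-1}$, i.e. $\bm G_k = \bm P_k^{-1}\widetilde{\bm S}_k^{-1}\widetilde{\bm U}_k = \bm G_k^C$. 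Here one should be slightly careful: $\widetilde{\bm U}_k^ \mathrm H \in \mathbb{C}^{R_k\times L_k}$ is not square when $L_k < R_k$, so $\bm G_k$ is only determined on the range of $\widetilde{\bm U}_k^ \mathrm H$; I would note that the detection matrix is assumed (as in~\cite{Conjugate}) to act trivially on the orthogonal complement, or equivalently that we identify $\bm G_k$ with $\bm G_k^C$ under this minimal-norm convention, so the equivalence is to be read modulo that convention. This subtlety — that the uniqueness in the $\Leftarrow$ direction holds only on the effective subspace — is the main point that needs a careful sentence rather than a computation.

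Finally, to get~\eqref{model_conj}, apply $\bm G = \bm G^C$ to the channel model~\eqref{Basic received vector}: $\bm r = \bm G^C(\bm H\bm W\bm x + \bm n) = (\bm G^C\bm H)\bm W\bm x + \bm G^C\bm n$. By the just-proved identity $\bm G^C\bm H = \bm P^{-1}\widetilde{\bm V}$, the signal term becomes $\bm P^{-1}\widetilde{\bm V}\bm W\bm x$; and since $\widetilde{\bm V}\bm W \sim \bm P$ by the asymptotic-diagonalization property~\eqref{IRC 1} (with the same $\bm P$, by the Remark following it), $\bm P^{-1}\widetilde{\bm V}\bm W = \bm I + \mathcal{O}(\lambda)$, so up to the $\mathcal{O}(\lambda)$ terms absorbed elsewhere the signal term is $\widetilde{\bm V}\bm W\bm x$ as written — here I would mention that the statement~\eqref{model_conj} uses the convention of writing $\widetilde{\bm V}\bm W\bm x$ (equivalently $\bm P^{-1}\widetilde{\bm V}\bm W\bm x$ up to the precoding normalization), consistent with the preceding subsection. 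The noise term is $\bm G^C\bm n = \bm P^{-1}\widetilde{\bm S}^{-1}\widetilde{\bm U}\bm n =: \tilde{\bm n}$ directly from~\eqref{Conjugate Detection} stacked over users, giving the claimed form. I expect no real obstacle here; the only care needed is bookkeeping of the truncated versus full SVD and the block-diagonal structure, plus the one-line remark on the minimal-norm convention in the converse.
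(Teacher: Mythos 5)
Your forward direction follows the same idea as the paper (plug \eqref{Conjugate Detection} into $\bm G_k\bm H_k$ and use the SVD), but you substitute the rank-$L_k$ truncation for the exact channel: $\bm H_k=\widetilde{\bm U}_k^ \mathrm H\widetilde{\bm S}_k\widetilde{\bm V}_k$ is only an approximation when $L_k<R_k$, so as written your computation proves the identity for the truncated channel, not for $\bm H_k$ itself. The paper avoids this by using the exact decomposition of Lemma~\ref{Main Decomposition}: $\bm G^C\bm H=\bm P^{-1}\widetilde{\bm S}^{-1}\widetilde{\bm U}\bm U^ \mathrm H\bm S\bm V=\bm P^{-1}\widetilde{\bm S}^{-1}[\,\bm I\ \ \bm O\,]\bm S\bm V=\bm P^{-1}\widetilde{\bm V}$, i.e.\ the trailing singular components are annihilated exactly because $\widetilde{\bm U}\bm U^ \mathrm H=[\,\bm I\ \ \bm O\,]$. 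This is a one-line fix, so it is a minor imprecision rather than a real obstruction.

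The genuine gap is in your converse. By right-multiplying the hypothesis by the truncated $\widetilde{\bm V}_k^ \mathrm H$ you keep only its component on the span of the first $L_k$ right singular vectors, obtain $\bm G_k\widetilde{\bm U}_k^ \mathrm H=\bm P_k^{-1}\widetilde{\bm S}_k^{-1}$, and then conclude that $\bm G_k$ is undetermined on the orthogonal complement, so that the equivalence holds only ``modulo a minimal-norm convention.'' That weakens the lemma and mislocates the difficulty: under the standing full-rank assumption on $\bm H_k$ the rows of $\bm H_k$ are linearly independent, so the map $\bm G_k\mapsto\bm G_k\bm H_k$ is injective and $\bm G_k\bm H_k=\bm P_k^{-1}\widetilde{\bm V}_k$ determines $\bm G_k$ \emph{completely}; this is exactly the paper's sufficiency argument (the expansion of each row of $\widetilde{\bm V}$ in the rows of $\bm H$ is unique, and the coefficients are the entries of $\bm P\bm G$, so any $\bm G$ with property \eqref{IRC 4} must coincide with $\bm G^C$). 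Within your own computational style the repair is to multiply by the full $\bm V_k^ \mathrm H$ instead of $\widetilde{\bm V}_k^ \mathrm H$: since $\bm V_k\bm V_k^ \mathrm H=\bm I_{R_k}$ and $\widetilde{\bm V}_k\bm V_k^ \mathrm H=[\,\bm I_{L_k}\ \ \bm O\,]$, the hypothesis gives $\bm G_k\bm U_k^ \mathrm H\bm S_k=\bm P_k^{-1}[\,\bm I_{L_k}\ \ \bm O\,]$, hence $\bm G_k=\bm P_k^{-1}[\,\widetilde{\bm S}_k^{-1}\ \ \bm O\,]\bm U_k=\bm G_k^C$ exactly, with no convention. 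Separately, your reconciliation of the signal term in \eqref{model_conj} via ``$\bm P^{-1}\widetilde{\bm V}\bm W=\bm I+\mathcal O(\lambda)$'' does not do what you claim: $\bm P^{-1}\widetilde{\bm V}\bm W\bm x$ and $\widetilde{\bm V}\bm W\bm x$ differ by the factor $\bm P$, which is not $\mathcal O(\lambda)$; the paper derives \eqref{model_conj} directly from $\bm G^C\bm H=\bm P^{-1}\widetilde{\bm V}$, and any residual mismatch of the $\bm P^{-1}$ factor there is a notational issue of the statement itself, not something an asymptotic argument can absorb.
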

\begin{proof}
Necessity. Using Lemma~(\ref{Main Decomposition}) we can write 
\begin{equation}
    \bm G^C \bm H = {\bm P}^{-1}\widetilde{\bm S}^{-1} \widetilde{\bm U} \bm U^ \mathrm H \bm S \bm V = {\bm P}^{-1}\widetilde{\bm S}^{-1} \big[\begin{array}{c|c} \bm I & \bm O \end{array}\big] \bm S \bm V =  {\bm P}^{-1}\widetilde{\bm S}^{-1} \widetilde{\bm S} \widetilde{\bm V} = {\bm P}^{-1}\widetilde{\bm V},
\end{equation}

which immediately leads to~\eqref{model_conj}.

Sufficiency. Assume that~\eqref{IRC 4} holds, then $\widetilde{\bm V} = \bm P \bm G \bm H  $, since the matrix $\bm P > O$. Then, $\forall \bm v \in \widetilde{\bm V}$ expansion of vector~$\bm v$ in basis $\bm H$ is unique. The elements of the matrix $\bm P \bm G$ are the coefficients of this expansion. Therefore, a matrix ${\bm G}$ with the property~\eqref{IRC 4} is unique. 

The last equivalence in~\eqref{IRC 4} is true due to the block diagonality of the matrix $\bm G$.
\end{proof}

\begin{theorem}
In assumption that $\bm H_k$ has the full rank and precoding $\bm W$ has property~\eqref{IRC 1}, detection $\bm G^{\textit{IRC}} (\lambda) $~\eqref{def:IRC} asymptotically equals to $\bm G^{C}$~\eqref{Conjugate Detection}, in other words $\bm G^{\textit{IRC}}(\lambda) \sim  \bm G^C  \text{ as } \lambda\rightarrow 0$.
\end{theorem}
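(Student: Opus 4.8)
The plan is to show that the matrix inside the inverse in $\bm G_k^{\textit{IRC}}(\lambda)$ collapses, as $\lambda\to 0$, to something proportional to $\bm A_k^ \mathrm H \bm A_k$, and that the surviving leading term reproduces the Conjugate Detection formula~\eqref{Conjugate Detection}. Recall $\bm A_k=\bm H_k\bm W_k$. Using the Main Decomposition (Lemma~\ref{Main Decomposition}) together with the asymptotic diagonalization property~\eqref{IRC 1}, I would first compute the effective channel $\bm A_k=\bm H_k\bm W_k=\bm U_k^ \mathrm H\widetilde{\bm S}_k\widetilde{\bm V}_k\bm W_k$ and note that $\widetilde{\bm V}_k\bm W_k=\bm P_k+\mathcal O(\lambda)$ by~\eqref{IRC 1}; hence $\bm A_k=\bm U_k^ \mathrm H\widetilde{\bm S}_k\bm P_k+\mathcal O(\lambda)$ (here I use that the off-diagonal cross-blocks $\widetilde{\bm V}_k\bm W_j$ for $j\ne k$ do not enter $\bm A_k$; only the relevant sub-block of~\eqref{IRC 1} is used). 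This gives $\bm A_k^ \mathrm H\bm A_k=\bm P_k\widetilde{\bm S}_k^2\bm P_k+\mathcal O(\lambda)$ and $\bm A_k^ \mathrm H=\bm P_k\widetilde{\bm S}_k\widetilde{\bm U}_k+\mathcal O(\lambda)$, since $\widetilde{\bm U}_k\widetilde{\bm U}_k^ \mathrm H=\bm I$ from $\bm U_k$ being unitary and $\widetilde{\bm U}_k$ its first $L_k$ rows.

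Next I would handle the interference covariance term $\bm R_{uu}^k=\bm H_k(\bm W\bm W^ \mathrm H-\bm W_k\bm W_k^ \mathrm H)\bm H_k^ \mathrm H$. Writing $\bm H_k\bm W_j=\bm U_k^ \mathrm H\widetilde{\bm S}_k\widetilde{\bm V}_k\bm W_j$ for $j\ne k$, the property~\eqref{IRC 1} forces $\widetilde{\bm V}_k\bm W_j=\mathcal O(\lambda)$ (the off-diagonal blocks of $\widetilde{\bm V}\bm W$), so each cross-term $\bm H_k\bm W_j=\mathcal O(\lambda)$, whence $\bm R_{uu}^k=\mathcal O(\lambda^2)=\mathcal O(\lambda)$. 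Therefore the bracket in the IRC inverse is $\bm A_k^ \mathrm H\bm A_k+\bm R_{uu}^k+\lambda\bm I=\bm P_k\widetilde{\bm S}_k^2\bm P_k+\mathcal O(\lambda)$, and because $\bm P_k>0$ and $\widetilde{\bm S}_k>0$ (full rank $\bm H_k$), the leading matrix $\bm P_k\widetilde{\bm S}_k^2\bm P_k$ is invertible. Applying Lemma~\ref{lemm inv} with $\bm M=\bm P_k\widetilde{\bm S}_k^2\bm P_k$ and $\lambda\bm N$ absorbing all the $\mathcal O(\lambda)$ perturbation gives $(\bm A_k^ \mathrm H\bm A_k+\bm R_{uu}^k+\lambda\bm I)^{-1}=\bm P_k^{-1}\widetilde{\bm S}_k^{-2}\bm P_k^{-1}+\mathcal O(\lambda)$.

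Finally I would multiply: $\bm G_k^{\textit{IRC}}(\lambda)=\bm A_k^ \mathrm H(\bm A_k^ \mathrm H\bm A_k+\bm R_{uu}^k+\lambda\bm I)^{-1}=\big(\bm P_k\widetilde{\bm S}_k\widetilde{\bm U}_k+\mathcal O(\lambda)\big)\big(\bm P_k^{-1}\widetilde{\bm S}_k^{-2}\bm P_k^{-1}+\mathcal O(\lambda)\big)=\bm P_k^{-1}\widetilde{\bm S}_k^{-1}\widetilde{\bm U}_k+\mathcal O(\lambda)=\bm G_k^C+\mathcal O(\lambda)$, which is exactly $\bm G_k^{\textit{IRC}}(\lambda)\sim\bm G_k^C$ as $\lambda\to0$, and by block-diagonality this assembles to $\bm G^{\textit{IRC}}(\lambda)\sim\bm G^C$. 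The main obstacle I anticipate is bookkeeping the $\mathcal O(\lambda)$ terms carefully through the inversion — in particular justifying that Lemma~\ref{lemm inv} applies even though the perturbation is not literally $\lambda\bm N$ for a fixed $\bm N$ but an $\mathcal O(\lambda)$ matrix (one should either restate Lemma~\ref{lemm inv} for general $\mathcal O(\lambda)$ perturbations of an invertible matrix, or argue directly via the Neumann series that $(\bm M+\bm E)^{-1}=\bm M^{-1}+\mathcal O(\|\bm E\|)$ when $\|\bm E\|=\mathcal O(\lambda)$) — and making sure the $\mathcal O(\lambda)$ constants are uniform, i.e. do not blow up as the perturbation interacts with $\bm P_k^{-1}$ or $\widetilde{\bm S}_k^{-1}$, which is where the full-rank hypothesis on $\bm H_k$ and positivity of $\bm P$ are essential.
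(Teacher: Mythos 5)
Your final answer coincides with the paper's, but the route you take relies on two reductions that do not follow from the stated hypotheses. You claim that property~\eqref{IRC 1} forces $\bm H_k\bm W_j=\mathcal O(\lambda)$ for $j\ne k$ (hence $\bm R_{uu}^k=\mathcal O(\lambda)$) and that $\bm A_k=\bm H_k\bm W_k=\widetilde{\bm U}_k^ \mathrm H\widetilde{\bm S}_k\bm P_k+\mathcal O(\lambda)$. Property~\eqref{IRC 1} only controls $\widetilde{\bm V}\bm W$, i.e.\ the action of the precoder on the $L_k$ \emph{selected} right singular vectors of each user. The theorem, however, assumes $\bm H_k$ of full rank with $L_k<R_k$ in general (in the paper's setup $R_k=4$, $L_k=2$): writing $\bm H_k=\bm U_k^ \mathrm H\bm S_k\bm V_k$, the remaining $R_k-L_k$ rows of $\bm V_k$ are orthogonal to $\widetilde{\bm V}_k$ but not to the other users' $\widetilde{\bm V}_j$, hence not to the columns of $\bm W_j$ (nor of $\bm W_k$, whose ZF-type construction mixes all users' vectors). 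So $\bm H_k\bm W_j$ has an $\mathcal O(1)$ component along the unselected left singular directions, $\bm R_{uu}^k$ is generally \emph{not} small, and $\bm A_k$ is not $\widetilde{\bm U}_k^ \mathrm H\widetilde{\bm S}_k\bm P_k+\mathcal O(\lambda)$. Your reductions would be valid only if $L_k=R_k$ or if $\bm H_k$ were first replaced by its rank-$L_k$ truncation $\widetilde{\bm U}_k^ \mathrm H\widetilde{\bm S}_k\widetilde{\bm V}_k$ — a convention the paper adopts only \emph{after} this theorem, and one that is not granted by the "full rank $\bm H_k$" hypothesis. A related symptom: in the correctly dimensioned IRC form $\bm A_k^ \mathrm H(\bm A_k\bm A_k^ \mathrm H+\bm R_{uu}^k+\lambda\bm I)^{-1}$ (the ordering in~\eqref{def:IRC} is a typo: $\bm A_k^ \mathrm H\bm A_k$ is $L_k\times L_k$ while $\bm R_{uu}^k$ is $R_k\times R_k$), discarding $\bm R_{uu}^k$ would leave the rank-deficient leading matrix $\bm A_k\bm A_k^ \mathrm H$ of rank $L_k<R_k$, so Lemma~\ref{lemm inv} cannot be applied at all; the interference covariance is exactly what keeps the matrix invertible and cannot be dropped.

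The paper avoids this entirely: it never expands $\bm A_k$ nor neglects $\bm R_{uu}^k$, but recombines signal and interference into the full covariance, $\bm A_k\bm A_k^ \mathrm H+\bm R_{uu}^k=\bm H_k\bm W\bm W^ \mathrm H\bm H_k^ \mathrm H$, so that $\bm G_k^{\textit{IRC}}(\lambda)=(\bm H_k\bm W_k)^ \mathrm H(\bm H_k\bm W(\bm H_k\bm W)^ \mathrm H+\lambda\bm I)^{-1}$, and then uses the exact identity $\bm G_k^C\bm H_k=\bm P_k^{-1}\widetilde{\bm V}_k$ from Lemma~\ref{lemma_conj_det} together with the consequence~\eqref{IRC 2} of~\eqref{IRC 1}, $\widetilde{\bm V}_k\bm W\bm W^ \mathrm H\sim\bm P_k\bm W_k^ \mathrm H$, to conclude $\bm G_k^C\,(\bm H_k\bm W(\bm H_k\bm W)^ \mathrm H)\sim(\bm H_k\bm W_k)^ \mathrm H$ and hence $\bm G_k^C\sim\bm G_k^{\textit{IRC}}$. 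This only needs the precoder's behavior on the selected subspace, which is precisely what~\eqref{IRC 1} gives. To repair your argument you would have to either restrict to $L_k=R_k$, add the rank-$L_k$ channel truncation (or a smallness assumption on the leaked interference) as an explicit hypothesis, or switch to the paper's recombination argument.
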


\begin{proof}
We need the following consequence of the (\ref{IRC 1}) property:
\begin{equation}\label{IRC 2}
\bm W \bm W^ \mathrm H\widetilde{\bm V}_{k}^ \mathrm H= \left(  \sum _{v=1}^{K}\bm W_v\bm W_v^ \mathrm H \right) \widetilde{\bm V}_{k}^ \mathrm H \sim \bm W_{k}\bm W_{k}^ \mathrm H\widetilde{\bm V}_{k}^ \mathrm H \sim \bm W_{k} \bm P_k
\end{equation}
Taking into account the form of \( \bm R_{uu}^{k}  \) we can rewrite~\cite{QNS}:
\begin{multline*}
    \bm G^{\textit{IRC}}_k(\lambda) = \left( \bm H_{k}\bm W_{k} \right) ^ \mathrm H (\bm H_k \bm W_k  (\bm H_k \bm W_k)^ \mathrm H + \bm R_{uu}^{k}  + \lambda \bm I)^{-1} = \\ = \left( \bm H_{k}\bm W_{k} \right) ^ \mathrm H (\bm H_k \bm W_k  (\bm H_k \bm W_k)^ \mathrm H +  \bm H_k(\bm W \bm W^ \mathrm H-\bm W_k \bm W_k^ \mathrm H)\bm H_k^ \mathrm H  + \lambda \bm I)^{-1} = \\ = \left( \bm H_{k}\bm W_{k} \right) ^ \mathrm H (\bm H_k \bm W (\bm H_k \bm W)^ \mathrm H + \lambda \bm I)^{-1}.
\end{multline*}

Using (\ref{IRC 4}), (\ref{IRC 2}),  Lemma~\ref{lemma_conj_det} in the case $\lambda\rightarrow 0$ we obtain:
\begin{multline*}
    \bm G_k^C= \bm I \bm G_k^C \bm I = \bm P_k^{-1} \bm P_k \bm G_k^C ({\bm H}_{k} \bm W (\bm H_k \bm W)^ \mathrm H) (\bm H_k \bm W (\bm H_k \bm W)^ \mathrm H)^{-1}=
    \{Eq.~\ref{IRC 4}\} = \\
    =\bm P_k^{-1} \widetilde{\bm V}_{k} \bm W \bm W ^ \mathrm H \bm H_{k} ^ \mathrm H (\bm H_k \bm W (\bm H_k \bm W)^ \mathrm H)^{-1} \sim \{Eq.~\ref{IRC 2}\} \sim \\
    \sim  \bm W_{k}^ \mathrm H \bm H_{k}^ \mathrm H (\bm H_k \bm W (\bm H_k \bm W)^ \mathrm H)^{-1} 
    = \left( \bm H_{k}\bm W_{k} \right) ^ \mathrm H (\bm H_k \bm W (\bm H_k \bm W)^ \mathrm H)^{-1}  \sim  \bm G^{\textit{IRC}}_k .
\end{multline*}

\end{proof}

\begin{remark}
The introduced \textit{CD} detection is speculative: it hardly can be implemented in practice. UE measures $\bm H_k \bm W_k$ via pilot signals instead of $\bm H_k$. Nonetheless, it is very useful for theoretical research.  Moreover, the asymptotic behavior of \textit{MMSE} and \textit{MMSE-IRC} detection is similar to that of \textit{CD} (Sec.~\ref{sec:similarity_proof}). Particularly, if precoding $\bm W$ is Zero-Forcing~\eqref{Zero-Forcing} and the noise power is zero ($\sigma^2 = 0$), then $\bm G^{\textit{IRC}} (\lambda) = \bm G^C$; if, additionally, precoding has the full rank, then $\bm G^{\textit{MMSE}} (\lambda) = \bm G^C$. 
\end{remark}

\begin{remark}\label{Reducing to Layers}
Lemma~\ref{lemma_conj_det} shows that the assumption that UEs use \textit{CD} on their side sufficiently simplifies the initial problem, decreases its dimensions, and allows notation to be uniform. Namely, we can work with \textit{user layers} of shapes $L_k$ and $L$ instead of considering \textit{user antennas} space.
Note also that for precoding it is sufficient to only perform Partial SVD of the channel $\bm H_k \in \mathbb C ^ {R_k \times T}$, keeping just the first $L_k$ singular values and vectors for each user $k$: $\bm H_k \approx \widetilde{\bm U}^ \mathrm H_k \widetilde{\bm S}_k \widetilde{\bm V}_k$.

Based on this, in what follows we can omit the tilde and write $\bm U_k, \bm S_k, \bm V_k$ instead of $\widetilde{\bm U}_k, \widetilde{\bm S}_k, \widetilde{\bm V}_k$ correspondingly.
\end{remark} 

\subsection{Low Correlated Users}\label{sec:low_cor}

We define an \textit{interference-correlation matrix} as $\bm C = \bm V \bm V^ \mathrm H - \bm I$. In real networks, the set of UEs is chosen by Scheduler and the number of layers of each UE is chosen to be fixed by the Rank Selection algorithm. Both Scheduler and Rank Selection methods provide $\|\bm C\| = \mathcal{O}(\lambda)$, where $\lambda=\frac{\sigma^2}{P}$ is the noise-power ratio. Thus, we assume user correlation to be low compared to noise power, which means $ \| \bm C \|  = \mathcal{O}(\lambda)$.

\begin{lemma}\label{lemma:singular_channel}
    For precoding $\bm W = \bm W' \bm P $ satisfying the property (\ref{IRC 1}) and inference-correlation matrix $\bm C = \bm V \bm V^ \mathrm H - \bm I$ satisfying $  \| \bm C \| = \mathcal{O}(\lambda)$, is the noise-power ratio, it is asymptotically true that $\bm G^C \bm H \bm W = (1 - \lambda)\bm I + \mathcal{O}(\lambda^2)$.
\end{lemma}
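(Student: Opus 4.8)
The plan is to strip off the detection matrix using Lemma~\ref{lemma_conj_det}, reduce to the explicit Regularized Zero-Forcing expression, and then use the hypothesis $\|\bm C\|=\mathcal{O}(\lambda)$ to push the remainder from first to second order. By Lemma~\ref{lemma_conj_det} the Conjugate Detection satisfies $\bm G^C\bm H=\bm P^{-1}\bm V$ (tilde-free notation of Remark~\ref{Reducing to Layers}), so that $\bm G^C\bm H\bm W=\bm P^{-1}\bm V\bm W=\bm P^{-1}(\bm V\bm W')\bm P$. Property~\eqref{IRC 1} by itself gives only $\bm V\bm W=\bm P+\mathcal{O}(\lambda)$, i.e.\ $\bm G^C\bm H\bm W=\bm I+\mathcal{O}(\lambda)$, which is not sharp enough; the first-order term is visible only through the concrete precoder, so I would run the computation for RZF~\eqref{RZF} (the precoders ARZF~\eqref{ARZF} and ZF~\eqref{Zero-Forcing} being analogous, the latter yielding the identity exactly).

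For $\bm W'=\bm V^\mathrm H(\bm V\bm V^\mathrm H+\lambda\bm I)^{-1}$ one has $\bm V\bm W'=\bm V\bm V^\mathrm H(\bm V\bm V^\mathrm H+\lambda\bm I)^{-1}$, and substituting $\bm V\bm V^\mathrm H=\bm I+\bm C$ turns this into $(\bm I+\bm C)\big((1+\lambda)\bm I+\bm C\big)^{-1}$. The matrix $(1+\lambda)\bm I+\bm C=\lambda\bm I+\bm V\bm V^\mathrm H$ is positive definite, hence invertible, for every $\lambda>0$, and since $\|\bm C\|=\mathcal{O}(\lambda)\ll1$ I would expand the inverse by a Neumann series (equivalently by Lemma~\ref{epslem}),
\begin{equation*}
\big((1+\lambda)\bm I+\bm C\big)^{-1}=\frac{1}{1+\lambda}\Big(\bm I-\frac{\bm C}{1+\lambda}\Big)+\mathcal{O}(\|\bm C\|^2),
\end{equation*}
and multiply out. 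The two contributions linear in $\bm C$ combine to $\frac{\lambda}{(1+\lambda)^2}\bm C$, whose norm is $\mathcal{O}(\lambda)\cdot\|\bm C\|=\mathcal{O}(\lambda^2)$, and together with $\|\bm C\|^2=\mathcal{O}(\lambda^2)$ this leaves $\bm V\bm W'=(1+\lambda)^{-1}\bm I+\mathcal{O}(\lambda^2)=(1-\lambda)\bm I+\mathcal{O}(\lambda^2)$.

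Conjugating by the power matrix $\bm P$, which is a fixed allocation independent of $\lambda$, then preserves the estimate: $\bm G^C\bm H\bm W=\bm P^{-1}\big((1-\lambda)\bm I+\mathcal{O}(\lambda^2)\big)\bm P=(1-\lambda)\bm I+\mathcal{O}(\lambda^2)$, which is the claim. I expect the only genuinely delicate point to be the cancellation in the penultimate display: improving the remainder from $\mathcal{O}(\lambda)$ to $\mathcal{O}(\lambda^2)$ rests jointly on the exact form of the regularized inverse and on the low-correlation assumption $\|\bm C\|=\mathcal{O}(\lambda)$, and one should additionally check that the implied constants are not inflated by the similarity transform $\bm P^{-1}(\cdot)\bm P$ — which holds precisely because $\bm P$ does not scale with $\lambda$.
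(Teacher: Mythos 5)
Your proposal is correct and takes essentially the same route as the paper: reduce to $\bm P^{-1}\bm V\bm W'\bm P$ via Lemma~\ref{lemma_conj_det}, insert the explicit RZF inverse, and expand to first order so that the linear term in $\bm C$ is killed by the hypothesis $\|\bm C\|=\mathcal{O}(\lambda)$. The only cosmetic difference is the order of expansion — you substitute $\bm V\bm V^\mathrm H=\bm I+\bm C$ first and do a Neumann expansion in $\bm C$ with $\lambda$ kept exact, whereas the paper expands in $\lambda$ via Lemma~\ref{epslem} and only then uses $(\bm I+\bm C)^{-1}=\bm I+\mathcal{O}(\|\bm C\|)$ — and both yield the same cancellation and the same final estimate.
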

\begin{proof}
    \begin{multline}\label{eq:singular_channel}
        \bm V \bm W' = \bm V \bm V^ \mathrm H (\bm V \bm V^ \mathrm H + \lambda \bm I)^{-1} = \{Lemma~\ref{lemm inv} \}  =  \bm V \bm V^ \mathrm H ((\bm V \bm V^ \mathrm H)^{-1} - \lambda (\bm V \bm V^ \mathrm H)^{-2} + \mathcal{O}(\lambda^2))  = \\ = \bm I - \lambda (\bm V \bm V^ \mathrm H)^{-1} + \mathcal{O}(\lambda^2) = \bm I - \lambda (\bm C + \bm I)^{-1} + \mathcal{O}(\lambda^2) = \bm I - \lambda (\bm I + \mathcal{O}(\|\bm C\|)) + \mathcal{O} (\lambda^2) = \\ = (1 - \lambda) \bm I + \lambda \mathcal{O}(\|\bm C\|)  + \mathcal{O} (\lambda^2) = \{  (\| \bm C \| ) = \mathcal{O}(\lambda) \} = (1 - \lambda) \bm I + \mathcal{O}(\lambda^2)
    \end{multline}
    \begin{multline*}
        \bm G^C \bm H \bm W = \{Lemma~\ref{lemma_conj_det}\} = \bm P^{-1} \bm V \bm W = \bm P^{-1} \bm V \bm W' \bm P  = \{Eq.~\ref{eq:singular_channel} \} =  \\ = \bm P^{-1} (1 - \lambda)  \bm I \bm P  + \bm P^{-1} \mathcal{O} (\lambda^2) \bm P = (1 - \lambda)\bm I + \mathcal{O}(\lambda^2)
    \end{multline*}
\end{proof}

Using Lemma~\ref{lemma:singular_channel} we immediately get the following
\begin{theorem}
    For precoding $\bm W$ satisfying the property (\ref{IRC 1}) and inference-correlation matrix $\bm C = \bm V \bm V^ \mathrm H - \bm I$ satisfying $ \| \bm C \| = \mathcal{O}(\lambda)$, where $\lambda=\frac{\sigma^2}{P}$ is the noise-power ratio, formula for \textrm{SINR} (\ref{Symbol SINR}) in the case of $\bm G^C$~\eqref{Conjugate Detection} detection will take the asymptotic form:
    \end{theorem}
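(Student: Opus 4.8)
The plan is to substitute the asymptotic identity of Lemma~\ref{lemma:singular_channel} directly into the definition~\eqref{Symbol SINR} of $\textit{SINR}_l$ and estimate the signal, interference, and noise terms one at a time. The first observation I would make is that, since $\bm G^C$ is block-diagonal with blocks $\bm G^C_k\in\mathbb C^{L_k\times R_k}$, for a layer $l\in\mathcal L_k$ and any $i$ the scalar $\bm g^C_l\bm H_k\bm w_i$ appearing in~\eqref{Symbol SINR} is precisely the $(l,i)$ entry of the $L\times L$ matrix $\bm G^C\bm H\bm W$: the $l$-th row of $\bm G^C$ is supported on the columns of $\bm H$ that form $\bm H_k$, so $\bm g^C_l\bm H=(\bm g^C_l)|_{\text{block }k}\,\bm H_k$, whence $(\bm G^C\bm H\bm W)_{li}=\bm g^C_l\bm H_k\bm w_i$ for every $i$. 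Lemma~\ref{lemma:singular_channel} gives $\bm G^C\bm H\bm W=(1-\lambda)\bm I+\mathcal O(\lambda^2)$, hence entrywise
\begin{equation*}
\bm g^C_l\bm H_k\bm w_l=(1-\lambda)+\mathcal O(\lambda^2),\qquad \bm g^C_l\bm H_k\bm w_i=\mathcal O(\lambda^2)\quad(i\neq l).
\end{equation*}

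From this the numerator of~\eqref{Symbol SINR} is $|\bm g^C_l\bm H_k\bm w_l|^2=(1-\lambda)^2+\mathcal O(\lambda^2)=1+\mathcal O(\lambda)$, and the interference sum is $\sum_{i\neq l}|\bm g^C_l\bm H_k\bm w_i|^2=\mathcal O(\lambda^4)$ (a finite number of $\mathcal O(\lambda^2)$ terms squared). For the noise term I would use the explicit form~\eqref{Conjugate Detection}: the row of $\bm G^C_k=\bm P_k^{-1}\widetilde{\bm S}_k^{-1}\widetilde{\bm U}_k$ attached to layer $l$ equals $(\sqrt{p_l}\,s_l)^{-1}\bm u$, where $\bm u$ is a row of the semi-unitary matrix $\widetilde{\bm U}_k$ and hence has unit norm; therefore $\|\bm g^C_l\|^2=1/(p_l s_l^2)$ exactly, and $\sigma^2\|\bm g^C_l\|^2=\sigma^2/(p_l s_l^2)$. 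Collecting the three pieces,
\begin{equation*}
\textit{SINR}_l(\bm W,\bm H_k,\bm g^C_l,\sigma^2)=\frac{1+\mathcal O(\lambda)}{\sigma^2/(p_l s_l^2)+\mathcal O(\lambda^4)}.
\end{equation*}

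To finish I would show that the $\mathcal O(\lambda^4)$ interference term is negligible against the noise term, so the denominator is $\bigl(\sigma^2/(p_l s_l^2)\bigr)\bigl(1+o(1)\bigr)$. This is where the power budget enters: from~\eqref{power_antenna} one has $p_l\|\bm w'_l\|^2\le P$, so $\sigma^2/(p_l s_l^2)\ge\lambda\,\|\bm w'_l\|^2/s_l^2=\Omega(\lambda)$ (the factor $\|\bm w'_l\|^2/s_l^2$ staying bounded away from $0$ as $\lambda\to0$), and hence $\mathcal O(\lambda^4)\big/\Omega(\lambda)=\mathcal O(\lambda^3)\to0$. Consequently
\begin{equation*}
\textit{SINR}_l=\frac{p_l s_l^2}{\sigma^2}\cdot\frac{1+\mathcal O(\lambda)}{1+\mathcal O(\lambda^3)}=\frac{p_l s_l^2}{\sigma^2}\bigl(1+\mathcal O(\lambda)\bigr)\sim\frac{p_l s_l^2}{\sigma^2}\qquad(\lambda\to0),
\end{equation*}
which is~\eqref{SINR layer}. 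The computation is short, and the only point requiring genuine care is this last one: it is tempting to treat $p_l$ and $s_l$ as fixed $\mathcal O(1)$ constants, but one must keep the scaling $p_l=\mathcal O(P)$ forced by the power constraint in view in order to be certain that the residual (off-diagonal) interference, although $\mathcal O(\lambda^4)$, is dominated by the noise floor, which is itself only $\mathcal O(\lambda)$. A secondary but easy bookkeeping point is the identification of $\bm g^C_l\bm H_k\bm w_i$ with an entry of $\bm G^C\bm H\bm W$ via the block-diagonal structure of $\bm G^C$, which is what makes Lemma~\ref{lemma:singular_channel} directly applicable.
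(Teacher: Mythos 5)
Your proof is correct and follows essentially the same route as the paper: both substitute the asymptotic identity $\bm G^C\bm H\bm W=(1-\lambda)\bm I+\mathcal O(\lambda^2)$ of Lemma~\ref{lemma:singular_channel} into the definition~\eqref{Symbol SINR} and compare signal, interference, and noise terms, with $\|\bm g^C_l\|^2=1/(p_l s_l^2)$ coming from the explicit form of Conjugate Detection. Your additional bookkeeping — the block-diagonal identification of $\bm g^C_l\bm H_k\bm w_i$ with entries of $\bm G^C\bm H\bm W$, and the power-constraint argument showing the residual interference ($\mathcal O(\lambda^4)$, tighter than the paper's $\mathcal O(\lambda^2)$) is dominated by the $\Omega(\lambda)$ noise floor — only makes explicit steps the paper leaves implicit.
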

        \begin{equation}
        \label{SINR layer}
        \textrm{SINR}_l(\bm W, \bm H_k, \bm g_l^C, \sigma^2) \sim \frac{p_{l} s_{l}^{2}}{\sigma^2}
    \end{equation}
\begin{proof}
    \begin{multline*}
        \textrm{SINR}_l(\bm W, \bm H_k, \bm g_l^C, \sigma^2) := \dfrac{|\bm g^C_l \bm H_k \bm w_l |^2}{\sum_{i=1, \ne l}^{L} | \bm g^C_l \bm H_k \bm w_i |^2 + \sigma^2 \|\bm g^C_l\|^2}  = \{Lemma~\ref{lemma:singular_channel} \} = \\ = \dfrac{1 - \lambda + \mathcal{O}(\lambda^2)}{\mathcal{O}(\lambda^2) + \dfrac{\sigma^2}{p_{l} s_{l}^{2}}} =  \dfrac{1 - \dfrac{\sigma^2}{P} + \mathcal{O}\Big(\dfrac{\sigma^4}{P^2}\Big)}{\mathcal{O}\Big(\dfrac{\sigma^4}{P^2}\Big) + \dfrac{\sigma^2}{p_{l} s_{l}^{2}}}  \sim \dfrac{p_{l} s_{l}^{2}}{\sigma^2}
    \end{multline*}
\end{proof}


\subsection{Effective SINR Models}\label{sec:conjugate_sinr}

In this subsection, we compare two models of Effective SINR from~\cite{QAM64, SINR_eff_model, Conjugate}. In theoretical calculations, model (\ref{SINR_eff}) is extremely inconvenient. To simplify the formula of effective \textrm{SINR}~\eqref{SINR_eff}, we average $L_k$ per-symbol \textrm{SINR}s~\eqref{Symbol SINR} by the geometric mean, where $\mathcal{L}_k$ denotes the set of symbols for $k$-th user:
\begin{equation}
    \textrm{SINR}_k^{eff}(\bm W, \bm H_k, \bm G_k,  \sigma^2) = \Big({\prod\nolimits_{l \in \mathcal{L}_k} \textrm{SINR}_l(\bm W, \bm H_k, \bm g_l, \sigma^2) } \Big)^{\frac{1}{L_k}}, \quad \forall l \in \mathcal{L}_k.
    \label{User SINR}
\end{equation}

Fig.~\ref{fig:qam} shows the dependencies of $\textrm{SINR}^{eff} (dB)$ for a user with four antennas to justify the close relationship of the various \textrm{SINR} averaging~\eqref{SINR_eff} and~\eqref{User SINR}. The $ x $ axis is the 
average SINR in dB: $\frac{1}{4}\sum_{l=1}^4\textrm{SINR}_l(dB)$.

Fig.~\ref{fig:qam} shows the comparison of effective \textrm{SINR} in the form of the geometric mean and the form of different MCS-$\beta$ values. Differences between various effective \textrm{SINR}s can take values greater than five decibels. On the other hand, points \textrm{SINR} with a large difference in the maximum and minimum values are unusual in practice.

\begin{figure}
    \centering
    \includegraphics[width = \linewidth]{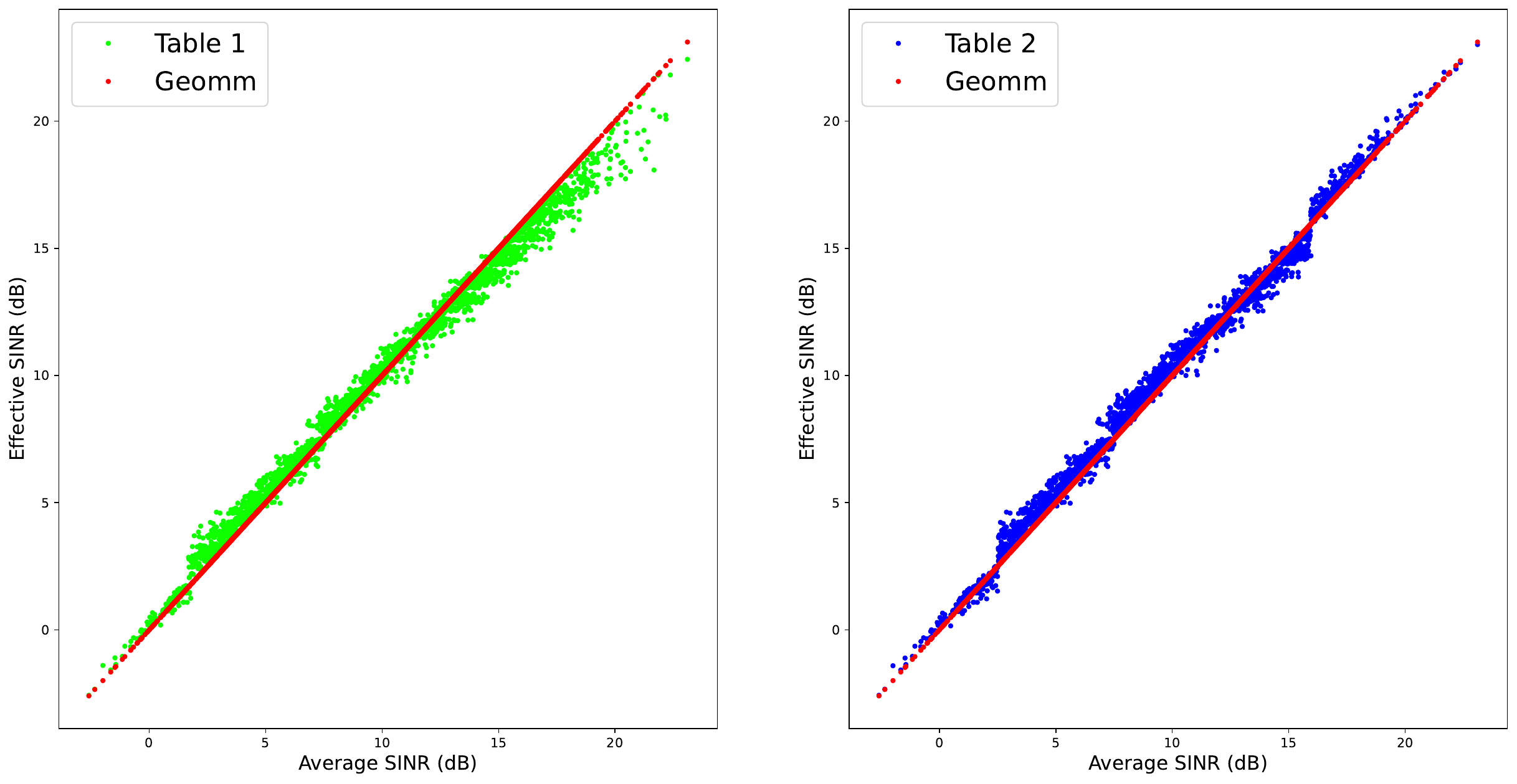}
    \caption{Approximation of exponential model of $\textrm{SINR}^{eff}$~\eqref{SINR_eff} realized with MCS Tables 1 and 2 (green and blue points). Geomm is an acronym of Geometrical Mean~\eqref{User SINR} (red points).}
    \label{fig:qam}
\end{figure}

For precoding $\bm W$ satisfying the property (\ref{IRC 1}) and from the formula for \textrm{SINR} (\ref{SINR layer}) with a \textit{CD} using the geometric mean effective \textrm{SINR} model (\ref{User SINR}), we can write the \textrm{SINR} for the $k$-th user as follows:
\begin{equation}
\textrm{SINR}^{eff}_k(\widetilde{\bm S}_k, \bm P_k, \sigma^2) = \frac{1}{\sigma ^ 2} \sqrt[L_k]{\prod\limits_{l = 1}^{L_k}(s_l ^ 2 p_l)}.
\label{SU SINR}
\end{equation}

The formula (\ref {SU SINR}) reflects the channel quality for the specified user without considering other users. The value of $\textrm{SINR}^{eff}_k(\widetilde{\bm S}_k, \bm P_k, \sigma^2) $ depends on the singular values $ \widetilde{\bm S}_k \in \mathbb{R} ^ {L_k \times L_k} $ (related to matrices $ \bm H_k \in \mathbb {C} ^ {R_k \times T} $), the transmitted power $ \bm P_k $ and noise $ \sigma ^ 2 $. This function will be used in theoretical calculations due to its simplicity.

\subsection{Spectral Efficiency Simplification}

In this section we simplify optimization problem of Spectral Efficiency maximization in case of Zero-Forcing algorithms with asymptotic diagonalization property~\eqref{IRC 1}, Conjugate detection matrix~\eqref{Conjugate Detection}, geometrical averaging of effective $\textrm{SINR}$ model~\eqref{SU SINR} and low correlated users (see Lemma~\ref{lemma:singular_channel}).

For any $x \gg 1$ it is true that: $ \log ( 1 + x ) = \log x + O ( x^{-1})$, and so
\begin{multline}
\label{SE_geom1}
    SE(\bm W, \bm V, \sigma^2) =  \sum_{k=1}^K L_k \log_2 (1 + \textrm{SINR}_k^{eff}(\bm W, \bm V_k, \bm S_k, \sigma^2)) =  \\=  \sum_{k=1}^K L_k \log_2 (\textrm{SINR}_k^{eff}(\bm W, \bm V_k, \bm S_k, \sigma^2)) + \sum_{k=1}^K \mathcal{O}(\textrm{SINR}_k^{eff (-1)}(\bm W, \bm V_k, \bm S_k, \sigma^2)) 
\end{multline}
We simplify the initial optimization problem by maximization of its leading term:
\begin{multline}
\label{SE_geom2}
    \sum_{k=1}^K L_k \log_2 (\textrm{SINR}_k^{eff}(\bm W, \bm V_k, \bm S_k, \sigma^2)) = \sum_{k=1}^K L_k \log_2 \Big({\prod\nolimits_{l \in \mathcal{L}_k} \textrm{SINR}_l(\bm W, \bm H_k, \bm g_l, \sigma^2, P) } \Big)^{\frac{1}{L_k}} = \\ = \sum_{k=1}^K  \log_2 {\prod_{l \in \mathcal{L}_k} \textrm{SINR}_l(\bm W, \bm H_k, \bm g_l, \sigma^2, P) }\rightarrow \max\limits_{\bm P}  
\end{multline}
These problems are not equivalent, although their solutions are close to each other. 
If we calculate $\bm W$ by ZF algorithm, that gives zero interference, then \textrm{SINR} is as follows
\begin{equation}
    \textrm{SINR}_{l}(\bm W, \bm v_{l}, s_{l}, \sigma^2) = \text{\{Zero-Forcing Algorithm\}} = \frac{s_{l}^2}{\sigma^2} p_{l}
    \label{sinr_zf}
\end{equation}
and maximization of the leading term gives
\begin{multline*}
    \sum_{k=1}^K \log_2 \prod_{l \in \mathcal{L}_k} \textrm{SINR}_{l}(\bm W, \bm v_{l}, s_{l}, \sigma^2) = \sum_{k=1}^K \log_2\prod_{l \in \mathcal{L}_k} \frac{s_{l}^2}{\sigma^2} p_{l} = \sum_{k=1}^K \log_2 \prod_{l \in \mathcal{L}_k} \frac{s_{l}^2}{\sigma^2}  \prod_{l \in \mathcal{L}_k} p_{l} = \\\sum_{j=1}^K  \log_2 \prod_{l \in \mathcal{L}_k} s_{l}^2 - \sum_{j=1}^K  \log_2 \prod_{l \in \mathcal{L}_k} \sigma^2 + \sum_{j=1}^K \log_2 \prod_{l \in \mathcal{L}_k} p_{l}  \rightarrow \max\limits_{\bm P}
\end{multline*}
Finally, we can reduce tasks~\eqref{First formulation} (a) and (b) to the following problems: 
\begin{equation}
    \sum_{k=1}^K \log_2 \prod_{l \in \mathcal{L}_k} p_{l} = \log_2 \prod_{l=1}^L p_l \rightarrow \max\limits_{\bm P}, \quad\text{s.t.     } 
    ||\bm W||^2 \leqslant P.
\end{equation}

\section{Solutions of the Problem}\label{sec:problem_solving}
According to~\cite[sec.~7]{Bjornson_tb_17} we consider equal transmit power strategy for all $K$ users. Such power allocation  gives the maximum for a reasonable lower bound on the $SE$ (\ref{Spectral Efficiency}) under some feasible assumptions. Although this Power allocation is not optimal, these heuristics provide a good suboptimal solution.

\subsection{Simplified PA Problem with Total Power Constraints}

\begin{theorem}
If $\bm W$ satisfies to the property~\eqref{IRC 1} and $\bm G = \bm G^{C}$, assuming model~\eqref{User SINR} of effective \textrm{SINR}, the equal PA (all $\|\bm w_l\|$ is equal, namely, $p_l=P/L$) asymptotically provides maximum to the first optimization problem:
\begin{equation}
U = \sum_n SE_n \rightarrow \max , \quad ||\bm W||^2 \leqslant P.
\end{equation}

\end{theorem}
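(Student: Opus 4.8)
The plan is to reduce the stated problem to the elementary optimization already isolated at the end of Section~\ref{sec:solution}, and then to dispatch that optimization by concavity. First I would invoke the chain of reductions culminating in~\eqref{SE_geom1}--\eqref{sinr_zf}: under property~\eqref{IRC 1}, Conjugate Detection~\eqref{Conjugate Detection}, the geometric-mean effective \textit{SINR} model~\eqref{SU SINR}, and low user correlation (Lemma~\ref{lemma:singular_channel}), the functional $U=\sum_n SE_n$ equals $\sum_{k=1}^K\log_2\prod_{l\in\mathcal L_k}\frac{s_l^2}{\sigma^2}p_l$ up to an additive remainder of order $\sum_k\mathcal O\!\left((\textit{SINR}^{eff}_k)^{-1}\right)=\mathcal O(\lambda)$ coming from $\log(1+x)=\log x+\mathcal O(x^{-1})$ (here $\lambda=\sigma^2/P$, and $\textit{SINR}^{eff}_k=\Theta(1/\lambda)$ since the $s_l$ are bounded away from zero under the rank-selection/scheduler assumptions). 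Since the factors $s_l^2/\sigma^2$ do not depend on $\bm P$, the leading term to be maximized is $\sum_{l=1}^L\log_2 p_l=\log_2\prod_{l=1}^L p_l$.

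Next I would express the total power constraint in the appropriate variables. With $\bm W=\bm W'\bm P$ and $\bm P=\diag(\sqrt{p_1},\dots,\sqrt{p_L})$, the $l$-th column of $\bm W$ is $\sqrt{p_l}\,{\bm w'}_l$, so $\|\bm W\|^2=\sum_{l=1}^L p_l\|{\bm w'}_l\|^2$. Introducing $\rho_l:=p_l\|{\bm w'}_l\|^2=\|\bm w_l\|^2$, which is precisely the parametrization in~\eqref{power_antenna}, constraint~\eqref{power_antenna}(a) becomes the linear constraint $\sum_{l=1}^L\rho_l\le P$. In these variables $\log_2\prod_{l=1}^L p_l=\sum_{l=1}^L\log_2\rho_l-\sum_{l=1}^L\log_2\|{\bm w'}_l\|^2$, and the last sum is a constant independent of the allocation. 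Hence the reduced problem is: maximize $\sum_{l=1}^L\log_2\rho_l$ over $\rho_l>0$ with $\sum_{l=1}^L\rho_l\le P$.

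This is immediate. Because $\log_2$ is strictly increasing, the optimum is attained on the boundary $\sum_l\rho_l=P$; and the AM--GM inequality gives $\prod_{l=1}^L\rho_l\le\left(\frac1L\sum_{l=1}^L\rho_l\right)^{L}=(P/L)^L$, with equality if and only if $\rho_1=\dots=\rho_L=P/L$. Equivalently, the Lagrangian $\sum_l\log_2\rho_l-\mu(\sum_l\rho_l-P)$ is strictly concave on the positive orthant and its KKT stationarity condition $1/(\rho_l\ln 2)=\mu$ forces all $\rho_l$ equal, with $\mu$ fixed by the active constraint. Thus the unique maximizer of the leading term is $\|\bm w_1\|^2=\dots=\|\bm w_L\|^2=P/L$, i.e. equal power allocation (which reads $p_l=P/L$ once the pre-power precoding columns are normalized, $\|{\bm w'}_l\|=1$). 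Since the discarded remainder is $\mathcal O(\lambda)$, equal PA asymptotically maximizes $U$ as $\lambda\to0$, which is the claim.

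I do not anticipate a genuine obstacle here; the only care required is bookkeeping — passing cleanly between the variable $p_l$ in which the simplified objective is most transparent and the variable $\rho_l=\|\bm w_l\|^2$ in which the TPC is linear — together with the already-justified observation that maximizing the leading term in place of $U$ itself perturbs the optimizer only by the $\mathcal O(\lambda)$ term that the word \emph{asymptotically} absorbs.
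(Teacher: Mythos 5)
Your proposal is correct and follows essentially the same route as the paper: after the same reduction (asymptotic $\log(1+\textit{SINR})\approx\log\textit{SINR}$, Conjugate Detection, ZF-type \textit{SINR}~\eqref{sinr_zf}, and the change of variables $\rho_l=p_l\|{\bm w'}_l\|^2$ turning~\eqref{power_antenna}(a) into $\sum_l\rho_l\le P$), the paper also maximizes $\prod_l\rho_l$ under this linear constraint, which it settles as the ``maximal box volume'' problem and which you settle equivalently by AM--GM/KKT, arriving at the same optimizer $\rho_l=P/L$, i.e.\ $p_l=(P/L)/\|{\bm w'}_l\|^2$ as in~\eqref{eq_pa}. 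Your added remarks (the $\mathcal O(\lambda)$ size of the discarded remainder and the normalization $\|{\bm w'}_l\|=1$ reconciling $p_l=P/L$ with~\eqref{eq_pa}) merely make explicit what the paper leaves implicit.
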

\begin{proof}
Using asymptotic $\ln (1 + \textrm{SINR}) = \ln(\textrm{SINR})(1 + O(\e))$ for large \textrm{SINR}, conjugate detection, \textrm{SINR} estimation (\ref{sinr_zf}) for ZF algorithm and considering coordinates $\rho_l$ we get first optimization problem (\ref{First formulation}):
\begin{equation}
    \prod_{l=1}^L \rho_{l} \rightarrow \max\limits_{\rho_1 \dots \rho_L}, \quad\text{s.t.     } 
    \sum\limits_{l=1}^L || {\bm w'}_l||^2\frac{\rho_l}{\|{\bm w'}_l\|^2}=\sum\limits_{l=1}^L \rho_l \leqslant P.
\end{equation}
It is an optimization problem of the maximal volume of the box with predefined lengths of edges which solution is 
\begin{equation}
\label{eq_pa}
\forall l: \rho_l=P/L, \text{ and } p_l=\frac{P/L}{||{\bm w'}_l||^2}
\end{equation}
\end{proof}

\begin{remark}
The original function~\eqref{Spectral Efficiency},~\eqref{User SINR} asymptotically reaches its maximum at the solution of the simplified  PA problem~\eqref{eq_pa}.
\end{remark}

\subsection{EESM Model and Total Power Constraints}\label{sec:qam_solving}

By analogy with the formulas~(\ref{SE_geom1} and~\ref{SE_geom2}) we can calculate Spectral efficiency using physical MCS-$\beta$ model~\eqref{SINR_eff}, where the parameter $\beta_k$ for each $k=1 \dotsc K$ depends on given $MCS$ and therefore depends on the precoding matrix, in particular on power allocation variables $p_l$ for all $l = 1 \dots L$:
\begin{multline}\label{SE_calc_beta_fixed}
SE(\bm W, \bm H, \sigma^2)=\sum_{k=1}^K L_k \ln (1+SINR_k^{eff}) = \\ = -
\sum_{k=1}^K L_k\ln\left (1 - \beta_k \log \left (
\frac {1} {L_k} \sum \limits_ {j = 1} ^ {L_k} \exp \left(- \frac {SINR_{kj}} {\beta_k}\right)
\right)\right)
\end{multline}

The function~\eqref{SE_calc_beta_fixed} is discontinuous. Nevertheless, if we fix $\beta_k$ for all $k=1 \dots K$, this function becomes smooth from $p_k$. For example, we can take $\beta_k$ from point $\bm P = \bm p_1$ as $p_l=\frac{P/L}{\|\bm {w'}_l\|^2}$. Next, write $\textrm{SINR}$ similar to   Eq.~\eqref{SINR layer} without interference as $\textrm{SINR}_{kl}=\frac{{p}_l}{\|\bm g_l\|^2\sigma^2}$ ($\bm g_l$ does not depend on ${p}_l$). 
We can write Lagrangian for the problem~\ref{First formulation} (a):
\begin{equation}
    \mathcal{L}=-\sum_{k=1}^{K} L_k\ln\left (1 - \beta_l \log \left(
\frac {1} {L_k} \sum \limits_ {j = 1} ^ {L_k} \exp \left(- \dfrac {{p}_j} {\sigma^2\beta_l\|\bm g_j\|^2}\right)
\right)\right)+\lambda_i \left(\sum\limits_{l=1}^{L}(\|{{\bm w'}}_{l}\|^2{p}_{l}) - P\right)
\end{equation}

And its partial derivatives concerning ${p}_l$.
\begin{equation}\label{eq:func_derivative}
    \mathcal{L}^{'}_{{p}_l}=-\frac{\frac{1}{\sigma^2\|\bm g_l\|^2}x_l}
    {\left(1-\beta_l\ln\left(X_k\right)\right)X_k}+\sum\limits_{t=1}^{T}(\lambda_t|{w'}_{tl}|^2),
\end{equation}

where $x_l=\exp\left(-\frac{{p}_l}{\beta_l\sigma^2\|\bm g_l\|^2}\right)$ and $X_k=\frac{1}{L_k}\sum\limits_{i=1}^{L_k}\exp\left(-\frac{{p}_i}{\beta_l\sigma^2\|\bm g_i\|}\right)$.

We can write Karush–Kuhn–Tucker conditions:
	\begin{equation}\label{system_KKT_hard}
		\begin{cases} 
			\mathcal{L}^{'}_{{p}_l}=0, \quad l=1 \dots L
			\\
			\lambda_i \left(\sum\limits_{l=1}^{L}(\|{{\bm w'}}_{l}\|^2{p}_l)-P\right)=0, 
			\\
			\lambda_i\geqslant 0
		\end{cases} 
	\end{equation}

And its solution is (see proof in Appendix~\ref{sec:appendix_derive}):
\begin{equation}
    {p}_l=-\ln(x_l)\beta_k\sigma^2\|\bm g_l\|^{2}
    =
    \sigma^2\|\bm g_l\|^{2}
    \left(
        \frac{\frac{P}{\sigma^2L}+\frac{1}{L}\sum\limits_{v=1}^{L}\|\bm g_v\|^{2}\|{{\bm w'}}_v\|^2f_v
        }{ \frac{1}{L_k}\sum\limits_{v \in \mathcal{L}_k}
        \left(
            \|\bm g_v\|^{2}\|{{\bm w'}}_v\|^2
        \right)
        }-f_l
    \right),
    \label{p_1lambd_QAM_simple}
\end{equation}

where:
\begin{equation}
    f_l=\beta_k\ln\left(\frac{\|\bm g_l\|^{2} \|{{\bm w'}}_{l}\|^2}{\frac{1}{L_k}\sum\limits_{v \in \mathcal{L}_k}\|\bm g_v\|^{2} \|{{\bm w'}}_v\|^2} \right)+1
\end{equation}

\subsection{Simplified PA Problem with Per-Antenna Power Constraints}

\begin{theorem}
If $\bm W$ satisfies to the property (\ref{IRC 1}) and $G = G_{C}$, assuming model (\ref{SINR_eff}) of effective \textrm{SINR}, we can find a strict asymptotic solution of the second optimization problem 
\begin{equation}
U = \sum_n SE_n \rightarrow \max , \quad ||\bm w_t||^2 \leqslant \frac{P}{T},\;  t=1 \dots T.
\end{equation}
by solving the system of equations.
\end{theorem}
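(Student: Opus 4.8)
The plan is to reduce the per-antenna constrained optimization to a tractable Lagrangian system, exactly paralleling the treatment of the TPC case and of Section~\ref{sec:qam_solving}, and then exhibit the solution explicitly. First I would take the simplified objective obtained in the Sum Spectral Efficiency Simplification subsection: under property~\eqref{IRC 1}, Conjugate Detection, the geometric-mean effective SINR model~\eqref{SU SINR}, and low user correlation, the leading term of $U=\sum_n SE_n$ is, up to additive constants independent of $\bm P$, $\log_2\prod_{l=1}^L p_l = \sum_{l=1}^L \log_2 p_l$. Writing $p_l = \rho_l/\|{\bm w'}_l\|^2$ as in~\eqref{power_antenna}, the objective is $\sum_l \log_2 \rho_l$ minus a constant, so we must maximize a strictly concave separable function of $\bm\rho=(\rho_1,\dots,\rho_L)$ subject to the $T$ linear inequality constraints $\|{\bm w'}_t \bm p\|^2 = \sum_{l=1}^L |{w'}_{tl}|^2\, p_l \le P/T$, i.e. $\sum_l |{w'}_{tl}|^2 \rho_l/\|{\bm w'}_l\|^2 \le P/T$ for $t=1,\dots,T$.

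Next I would invoke convexity: the feasible set is a polytope (intersection of half-spaces with the implicit $\rho_l\ge 0$), the objective is concave and the domain has nonempty interior, so Slater's condition holds and the KKT conditions are necessary and sufficient for the global optimum. I would introduce multipliers $\lambda_1,\dots,\lambda_T\ge 0$ for the $T$ per-antenna constraints, form the Lagrangian
\begin{equation*}
\mathcal{L}(\bm\rho,\bm\lambda) = \sum_{l=1}^L \log_2 \rho_l - \sum_{t=1}^T \lambda_t\left(\sum_{l=1}^L \frac{|{w'}_{tl}|^2}{\|{\bm w'}_l\|^2}\,\rho_l - \frac{P}{T}\right),
\end{equation*}
and set $\partial\mathcal{L}/\partial\rho_l = 0$, which yields the stationarity relations
\begin{equation*}
\frac{1}{\rho_l \ln 2} = \sum_{t=1}^T \lambda_t \frac{|{w'}_{tl}|^2}{\|{\bm w'}_l\|^2}, \qquad l=1,\dots,L,
\end{equation*}
together with complementary slackness $\lambda_t\big(\sum_l |{w'}_{tl}|^2 \rho_l/\|{\bm w'}_l\|^2 - P/T\big)=0$ and $\lambda_t\ge 0$. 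Thus $\rho_l = \big(\ln 2\,\sum_{t}\lambda_t |{w'}_{tl}|^2/\|{\bm w'}_l\|^2\big)^{-1}$, expressing each $\rho_l$ (hence each $p_l$) in closed form in terms of the dual vector $\bm\lambda$; the remaining task is the $T$-dimensional system for $\bm\lambda$ coming from the active constraints, which is precisely "solving the system of equations" asserted in the statement. I would also note the consistency check that if exactly one constraint is active (equivalently in the TPC limit where all $|{w'}_{tl}|^2/\|{\bm w'}_l\|^2$ aggregate to a single constraint) this collapses to $\rho_l = P/L$, recovering the equal-PA result of the previous theorem.

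The main obstacle is not the derivation of the KKT system — that is routine given the convex reformulation — but rather arguing that the asymptotic reduction of $U$ to $\sum_l \log_2 p_l$ is legitimate and that the maximizer of the simplified problem asymptotically maximizes the true $SE$; this is the same "the problems are not equivalent but their solutions are close" caveat flagged after~\eqref{SE_geom2}, and I would handle it exactly as in the TPC theorem, via $\log(1+\textit{SINR})=\log\textit{SINR}+\mathcal{O}(\textit{SINR}^{-1})$ as $\lambda=\sigma^2/P\to 0$ so that the discarded terms are uniformly $\mathcal{O}(\lambda)$ on the (compact) feasible set. A secondary technical point is to confirm that the optimal $\bm\lambda$ is not identically zero and that the system for $\bm\lambda$ is solvable — this follows because the objective $\sum_l\log_2\rho_l\to-\infty$ on the boundary $\rho_l\to 0$ and is unbounded above as any $\rho_l\to\infty$, so the maximum is attained at an interior-of-the-positive-orthant point on the boundary of the polytope, forcing at least one active per-antenna constraint and hence a nontrivial dual solution. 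I would leave the explicit numerical resolution of the $\bm\lambda$-system to the algorithmic section, since in general it has no simpler closed form than~\eqref{p_1lambd_QAM_simple}.
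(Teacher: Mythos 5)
Your proposal is correct and follows essentially the same route as the paper: the asymptotic reduction to maximizing $\sum_{l}\log p_{l}$ under the $T$ linear per-antenna constraints, then the Lagrangian and KKT system with complementary slackness $\lambda_t\bigl(\sum_l |w'_{tl}|^2 p_l - P/T\bigr)=0$, $\lambda_t\ge 0$, whose solution characterizes the optimum. The only real difference is that the paper goes on to enumerate the solutions of this system by the number $m$ of active constraints (obtaining, e.g., the closed form $p_l = \tfrac{P}{TL|w'_{il}|^2}$ for $m=1$, which is what the algorithm uses), whereas you defer the resolution of the dual system to the algorithmic section; your explicit convexity/Slater justification and the argument that at least one constraint must be active are sound additions that the paper leaves implicit.
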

\begin{proof}
The problem~\eqref{First formulation} (b) can be reduced to a task
	\begin{equation} 
		\sum_{l=1}^L \log({p}_{l})  \rightarrow \max\limits_{\bm P}, \quad\text{subject to  } \sum\limits_{l=1}^{L}(|{{ w'}}_{tl}|^2{p}_l)\leqslant \frac{P}{T} \; \forall t = 1 \dots T
		\label{opt4}
	\end{equation}
	
    To solve it, we can use the Karush-–Kuhn-–Tucker conditions. Lagrangian has the form
	\begin{equation}
	\label{lagrang}
		\mathcal{L}=-\sum_{l=1}^L \log( {p}_{l}) + \sum\limits_{t=1}^{T}\left(\lambda_t \left(\sum\limits_{l=1}^{L}(|{{w'}}_{tl}|^2{p}_{l}) - \frac{P}{T} \right)\right).
	\end{equation}
	
	If ${p}_{l}$ and $\lambda_t$ are the optimum of the optimization problem, then they satisfy the following conditions
	\begin{equation}\label{system_KKT_2}
    	\begin{cases} 
    		{p}_l\sum\limits_{t=1}^{T}|{w'}_{tl}|^2\lambda_t=1, \quad l=1 \dots L\\
    		\lambda_t \left(\sum\limits_{l=1}^{L}(|{w'}_{tl}|^2{p}_l)-\frac{P}{T}\right)=0, \quad t=1 \dots T
    		\\
    		\lambda_t\geqslant 0, \quad t=1 \dots T
    	\end{cases} \Leftrightarrow
    	\begin{cases} 
    		\bm A^T \bm \lambda=1./{\bm p}\\
    		\bm \lambda .* \left(\bm A{\bm p}- \bm 1 \frac{P}{T}  \right)=0\\
    		\bm \lambda\geqslant 0
    	\end{cases} 
	\end{equation}

We take $ \bm A = \{a_ {ij} = | {w '} _ {ij} | ^ 2 \} $.

For geometric reasons, the original optimization problem has a solution; therefore, there is at least one solution to the system (\ref{system_KKT_2}).

The resulting system can be solved by brute force on the set of zeroed lambdas. Let's say we have non-zero $m$ lambdas. Consider the cases.

\begin{itemize}
    \item[1.] $m > L$, in this case the linear system $\sum\limits_{l=1}^{L}(|{w'}_{tl}|^2{p}_l)=\frac{P}{T}, t=1 \dots m$ will be inconsistent since the number of equations is greater than the number of unknowns ($m> L$) and the system itself (\ref{system_KKT_2}) will not have a solution.
    \item[2.] $m=L$, in this case the linear system $\sum\limits_{l=1}^{L}(|{w'}_{tl}|^2{p}_l)=\frac{P}{T}, t=1 \dots m$ has exactly one solution, and the system itself (\ref{system_KKT_2}) has at most one solution.
    \item[3.] $1<m<L$. This case reduces to the system of quadratic equations. If $\bm A'$ is a matrix consisting of rows of matrix $\bm A$ corresponding to nonzero lambdas, then 
    \begin{equation}
        \begin{cases} 
    		{\bm A'}^T\bm \lambda=1./{\bm p}\\
    		\left({\bm A'}{\bm p}- \bm 1 \frac{P}{T}\right)=0\\
    	\end{cases} \Rightarrow
    	\begin{cases} 
    	    (\bm A')^{\perp}(1./{\bm p})=0\\
    		{\bm A'}{\bm p}= \bm 1 \frac{P}{T}\\
    	\end{cases} 
	\end{equation}
	Here $\bm A'\in \mathbb{C}^{m\times L}$ and $(\bm A')^{\perp}\in \mathbb{C}^{(L-m)\times L}$ is the orthogonal complement to $\bm A'$.
    \item[4.] $m=1$, in this case, there are one nonzero lambda. Let $\lambda_i\neq0$ therefore 
    \begin{equation}
        \label{1lambd}
        {p}_l=\frac{1}{\lambda_i|{w'}_{il}|^2}=\frac{P}{TL|{w'}_{il}|^2}
    \end{equation}
\end{itemize}
	
\end{proof}

\subsection{EESM Model and Per-Antenna Power Constraints}

In this section, we combine two ideas of previous sections. We calculate Spectral efficiency~\eqref{SE_calc_beta_fixed} using exponential model~\eqref{SINR_eff} for the fixed $\beta$ value.
Using this, we can write Lagrangian for the problem~\ref{First formulation} (b) and its partial derivatives concerning $p_l$:
\begin{equation}
    \mathcal{L}=-\sum_{k=1}^{K} L_k\ln\left (1 - \beta_l \log \left(
\frac {1} {L_k} \sum \limits_ {j = 1} ^ {L_k} \exp \left(- \dfrac {{p}_j} {\sigma^2\beta_l\|\bm g_j\|^2}\right)
\right)\right)+\sum\limits_{t=1}^{T}\lambda_t \left(\sum\limits_{l=1}^{L}(|{w'}_{tl}|^2{p}_{l}) - \frac{P}{T}\right)
\end{equation}

\begin{equation}
    \mathcal{L}^{'}_{{p}_l}=-\frac{\frac{1}{\sigma^2\|\bm g_l\|^2}x_l}
    {\left(1-\beta_l\ln\left(X_k\right)\right)X_k}+\sum\limits_{t=1}^{T}(\lambda_t|{w'}_{tl}|^2)
\end{equation}

The Karush-–Kuhn-–Tucker conditions:
	\begin{equation}\label{system_KKT_hard_2}
		\begin{cases} 
			\mathcal{L}^{'}_{{p}_l}=0, \quad l=1 \dots L
			\\
			\lambda_t \left(\sum\limits_{l=1}^{L}(|{w'}_{tl}|^2{p}_l)-\frac{P}{T}\right)=0, \quad t=1 \dots T
			\\
			\lambda_t\geqslant 0, \quad t=1 \dots T
		\end{cases} 
	\end{equation}

The resulting system can be solved by brute force on the set of zeroed lambdas. Let's say we have non-zero $m$ lambdas. Consider the cases.
\begin{itemize}
    \item[1.] $m>L$, in this case the linear system $\sum\limits_{l=1}^{L}(|{w'}_{tl}|^2{p}_l)=\frac{P}{T}, t=1 \dots m$ will be inconsistent since the number of equations is greater than the number of unknowns ($m > L$) and the system itself (\ref{system_KKT_hard_2}) will not have a solution.
    \item[2.] $m=L$, in this case the linear system $\sum\limits_{l=1}^{L}(|{w'}_{tl}|^2{p}_l)=\frac{P}{T}, t=1 \dots m$ has exactly one solution, and the system itself (\ref{system_KKT_hard_2}) has at most one solution.
    \item[3.] $1<m<L$. If $T'$ is the set of indexes of nonzero lambda, then this case reduces to the system of following equations:
    \begin{equation}
    	\begin{cases} 
    		\mathcal{L}^{'}_{{p}_l}=0\\
    		\sum\limits_{l=1}^{L}(|{w'}_{tl}|^2{p}_l)=\frac{P}{T}, \quad t\in T'\\
    		\lambda_t\geqslant 0, \quad t\in T'
    	\end{cases} 
	\end{equation}
    \item[4.] $m=1$, in this case there are one nonzero lambda. Let $\lambda_i\neq0$ therefore 
    
\end{itemize}

\begin{equation}
    {p}_l=-\ln(x_l)\beta_k\sigma^2\|\bm g_l\|^{2}
    =
    \sigma^2\|\bm g_l\|^{2}
    \left(
        \frac{\frac{P}{\sigma^2TL}+\frac{1}{L}\sum\limits_{v=1}^{L}\|\bm g_v\|^{2}|{{w'}}_{iv}|^2f_v
        }{ \frac{1}{L_k}\sum\limits_{v \in \mathcal{L}_k}
        \left(
            \|\bm g_v\|^{2}|{{w'}}_{iv}|^2
        \right)
        }-f_l
    \right),
    \label{p_1lambd_QAM}
\end{equation}

where:
\begin{equation}
    f_l=\beta_k\ln\left(\frac{\|\bm g_l\|^{2} |{{w'}}_{il}|^2}{\frac{1}{L_k}\sum\limits_{v \in \mathcal{L}_k}\|\bm g_v\|^{2} |{{w'}}_{iv}|^2} \right)+1
\end{equation}



The proof is similar to the proof of Eq.~\eqref{p_1lambd_QAM_simple} which can be found in the Appendix.

\subsection{Heuristic Algorithms, based on KKT-analysis}\label{sec:algorithm}

\textcolor{black}{
In this section we proposed two algorithms for SE maximization in the case of PAPC of two different models of Effective SINR. The first Alg.~\ref{OPT Scheme} assume Geometrical Averaging model~\eqref{User SINR}, while the second Alg.~\ref{OPT Scheme QAM} uses the proper EESM model~\eqref{SINR_eff}}

\begin{algorithm}
\SetAlgoLined
\KwIn{Channel $\bm H = \bm U^ \mathrm H \bm S \bm V$ by Lemma~\ref{Main Decomposition}, precoding matrix~$\bm W(\bm V)$, station power~$P$, number of base station antennas $T$, noise $\sigma^2$;}

\textbf{Calculate} $ \bm A = \{a_ {ij} = | {w} _ {ij} | ^ 2 \} \in \mathbb R^{T \times L}$, where $\bm a_i \in \mathbb R^L$ is a row vector.\\

\textbf{Calculate} starting point $\bm p_1:(\bm p_1)_l=\frac{P}{TL\|\bm w_l\|^2}$ 
\\

\textbf{Calculate} the hyperplane on which the square of the starting point lies. The index of this hyperplane is the maximal row norm: $i(\bm p_1)=\arg \max_i\{\|(\bm W \diag( \bm p_1))_{i, :}\|\}$;

\textbf{Calculate} optimal point on this hyperplane 
$\bm p_2:({\bm p}_2)_l=\frac{P}{TL|w_{il}|^2}$

\eIf{$\bm p_2$ satisfies to  Per-Antenna Power Constraints}{
     \Return{$\bm W_{opt} = \bm W \diag(\bm p_2$)}}{

\textbf{Calculate} direction vector $\bm d=\bm p_2^2-\bm p_1^2$

\textbf{Calculate} first intersection $\bm p_{opt}^2$ on a beam \{$\bm p_1^2+\alpha \bm d|\alpha>0$\} with other hyperplanes : $\bm p_{opt}^2 = \bm p_1^2+\alpha_{opt}\bm d$ where $\alpha_{opt} = \min\{\alpha_i|\alpha_i=\frac{P/T-\bm a_i^T \bm p_1^2}{\bm a_i^T \bm d}>0\}$

\Return{$\bm W_{opt} = \bm W \diag(\bm p_{opt})$}
}
\caption{IM CD~--- Heuristic Intersection Method of Power Allocation using Conjugate Detection and effective \textrm{SINR} as the geometrical mean}
\label{OPT Scheme}
\end{algorithm}

\begin{algorithm}
\SetAlgoLined
\KwIn{Channel $\bm H = \bm U^ \mathrm H \bm S \bm V$ by Lemma~\ref{Main Decomposition}, station power $P$, noise $\sigma^2$;}
\textbf{Define} smooth precoding function $\bm W(\bm V)$; \\

\textbf{Define} smooth detection function $\bm G(\bm H, \bm W)$ using MMSE-IRC~\eqref{def:IRC} or CD~\eqref{Conjugate Detection};
\\
\textbf{Define} smooth target function $J(\bm P)$. For example, $J^{\textrm{SE}}(\bm P) = \textrm{SE}(\bm W' \bm P, \bm H, \bm G, \sigma^2)$ using
\eqref{Symbol SINR},~\eqref{Spectral Efficiency} and~\eqref{SINR_eff};
\\

\textbf{Calculate} $ \bm A = \{a_ {ij} = | {w} _ {ij} | ^ 2 \}$, where $\bm a_i \in \mathbb R^L$ is a row vector.\\

\textbf{Calculate} starting point $\bm p_1:(\bm p_1)_l=\frac{P}{TL\|\bm w_l\|^2}$ 
\\
\textbf{Calculate} the hyperplane on which the square of the starting point lies. The index of this hyperplane is the maximal row norm: $i(\bm p_1)=\arg \max_i\{\|(\bm W \diag( \bm p_1))_{i, :}\|\}$;

\textbf{Calculate} the optimal point on this hyperplane $\bm p_2=\arg\max(J^{\textrm{SE}}(\bm P))$~\eqref{p_1lambd_QAM}   

\If { $\min\limits_i({\bm p}_1)_i<0$}{
\Return{$\bm W_{opt} = \bm W \diag(\bm p_1)$}
}
\eIf{$\bm p_2$ satisfies to Per-Antenna Power Constraints}{
\Return{$\bm W_{opt} = \bm W \diag(\bm p_2)$}}{

\textbf{Calculate} direction vector $\bm d=\bm p_2^2-\bm p_1^2$

\textbf{Calculate} first intersection $\bm p_{opt}^2$ on a beam \{$\bm p_1^2+\alpha \bm d|\alpha>0$\} with other hyperplanes : $\bm p_{opt}^2 = \bm p_1^2+\alpha_{opt}\bm d$ where $\alpha_{opt} = \min\{\alpha_i|\alpha_i=\frac{P/T-\bm a_i^T \bm p_1^2}{\bm a_i^T \bm d}>0\}$ 

\Return{$\bm W_{opt} = \bm W \diag(\bm p_{opt})$}
}

\caption{IM CD and IM IRC~--- Heuristic Intersection Method of Power Allocation using MMSE IRC Detection and exponential effective \textrm{SINR}~\eqref{SINR_eff} with MCS-$\beta$ Tab.~\ref{tab:beta}}
\label{OPT Scheme QAM}
\end{algorithm}

\begin{figure}
    \centering
    \includegraphics[width=0.7\linewidth]{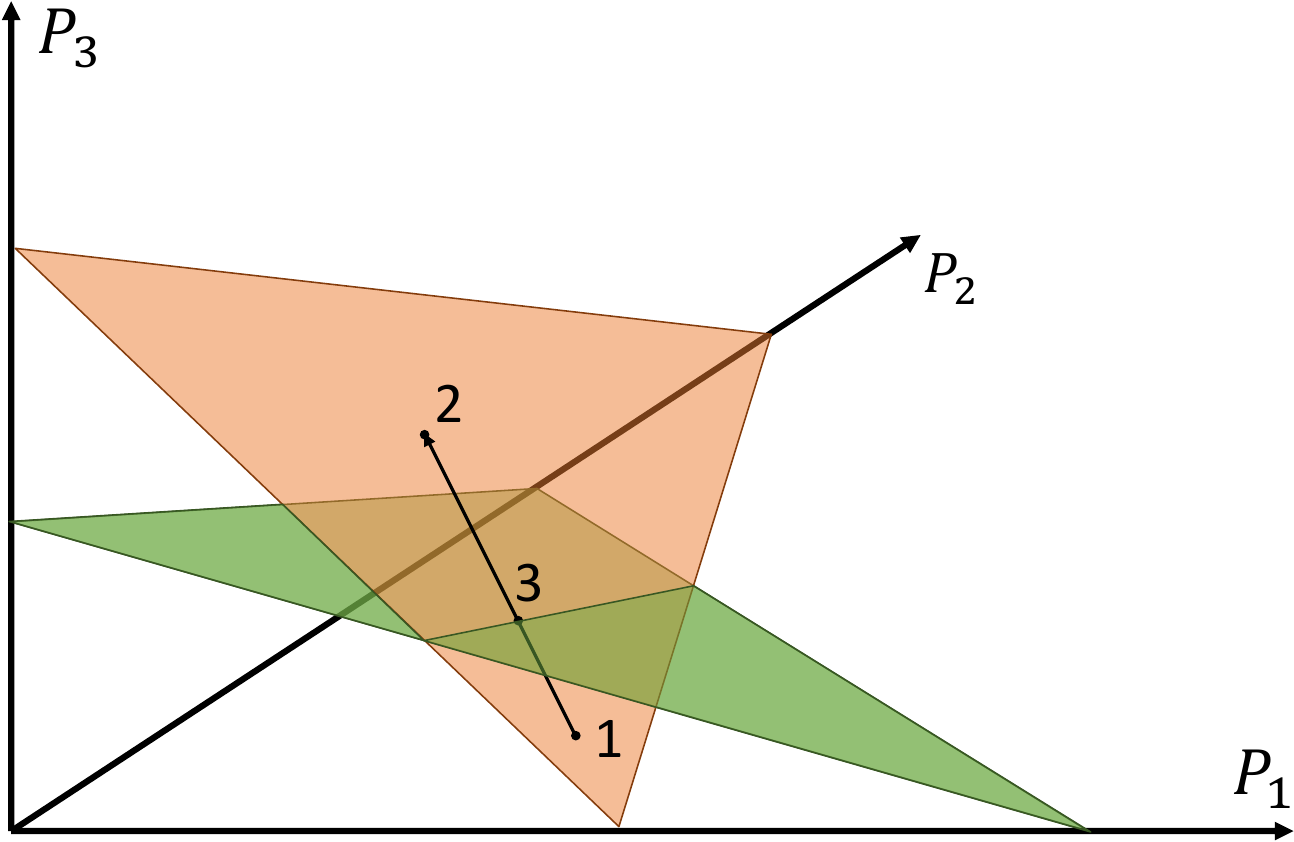}
    \caption{Geometrical illustration of the PA Intersection Method (Algs.~\ref{OPT Scheme} and Alg.~\ref{OPT Scheme QAM}) with a small layer space ($L=3$). Point~$1$ is the first approximation for solution of the algorithm and lies on the green hyperplane. Point~$2$ is the optimal Point on the green hyperplane, which we obtain as a solution to the Lagrange problem~\eqref{1lambd}. The red hyperplane contains the closest intersection of the beam from Point~$1$ to Point~$2$ concerning other hyperplanes. Point~$3$ is the intersection of the beam from Point~$1$ to Point~$2$ on the red hyperplane. If Point~$3$ lies between Points~$1$ and $2$, it is a solution of the algorithms, otherwise, a solution is Point~$2$.}
    \label{fig:geometrical_view}
\end{figure}

Both Alg.~\ref{OPT Scheme} and Alg.~\ref{OPT Scheme QAM} take equalizing powers as the first approximation of the vector $\bm p$ (see Point~$1$ on Fig.~\ref{fig:geometrical_view}). Then it finds the hyperplane on which the given point lies and searches on this hyperplane for the optimal (Point~$2$). To find the optimal point, we use the Eq.~\eqref{1lambd} for the Alg.~\ref{OPT Scheme} and by Eq.~\eqref{p_1lambd_QAM} for Alg.~\ref{OPT Scheme QAM}.

If the obtained point is satisfied with the Power Constrains, then this is the result of the algorithm. This point may not be satisfied with the Power Constraints. In this case, we construct a beam from the starting point to the optimal point. The first intersection with other hyperplanes (Point~$3$) is a result of the algorithms. The formula for Point~$2$ can be negative or zero. In this rare case, the result of the algorithm is Point~$1$. 




Fig.~\ref{fig:power_distr_im_method} shows the transmitted symbol powers using Alg.~\ref{OPT Scheme} (IM) compared to the EP method. The SINR values in dB of each layer are also given for comparison. It is shown that the SINR values increase for those symbols for which the power increases. And vice versa, the SINR decreases for those symbols for which the power decreases. The total precoding power increases with the use of Alg.~\ref{OPT Scheme} (IM).

\begin{figure}
    \centering
    \includegraphics[width=1\linewidth]{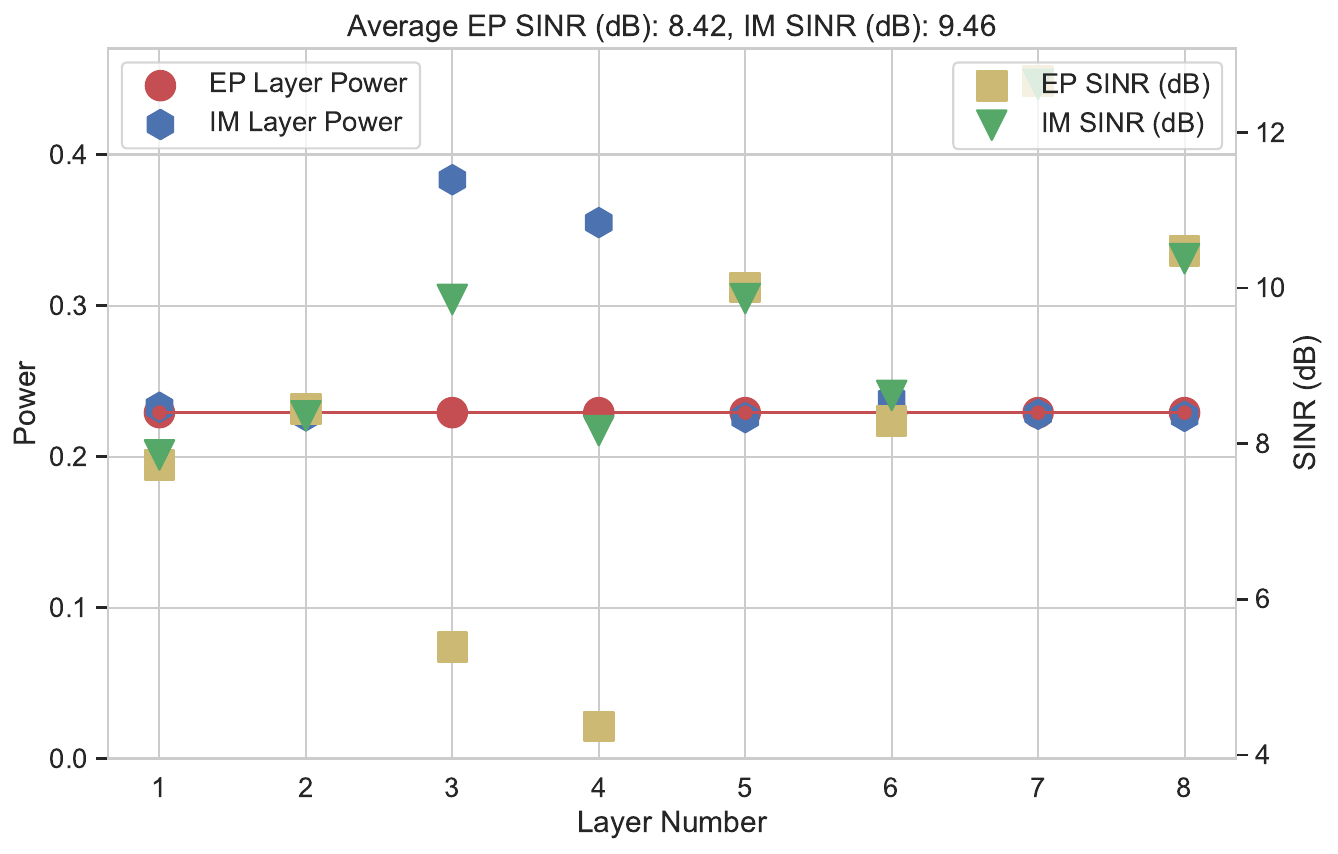}
    \caption{
        Power of the transmitted symbols (EP red circles and IM blue hexagons) and SINR of these symbols (EP yellow squares and IM green vertices) corresponding to Alg. EP and Alg.~\ref{OPT Scheme} IM. The EP method gives equal power to each transmitted symbol, which relates to equal yellow bars.
    }
    \label{fig:power_distr_im_method}
\end{figure}

\subsection{Computational Complexity}

In terms of complexity, the IM algorithms have the same order as the EP algorithm.

\begin{table}
\footnotesize
    \centering
    \caption{Complexity of the proposed Algs.~\ref{OPT Scheme} and~\ref{OPT Scheme QAM} with the number of summations, multiplications and use of special operations.}
    \begin{tabular}{|l|l|l|l|}
    \hline
    Variable & Summations & Multiplications & Special operations \\
    \hline
    $\bm A$ & $TL$ & $2TL$ & \\
    $\|\bm{w_l}\|^2, l=1,...,L $ & $(T-1)L$ & & \\
    $\bm{P_1}$ and $\bm{P_1}^2$& & $4L$ & \\
    $\bm W \bm{P_1}$ & & $2TL$ & \\
    $i(\bm{P_1})$ & $T(L-1)$ & $TL$ & $T-1$ comparisons\\ 
    $\bm{P_2}$ and $\bm{P_2}^2$ (Alg~\ref{OPT Scheme}) & & $5L$ & \\
    $\|\bm g^C_l\|^2, l=1,...,L$ & $2R_kL-L$ & $2R_kL$ &\\
    $\bm{P_2}$ and $\bm{P_2}^2$ (Alg~\ref{OPT Scheme QAM}) & $3L$ & $L(4L+8)$ & $L$ calculations of logarithm  \\
    $\bm W \bm{P_2}$ & & $2TL$ & \\
    $\bm d$ & $L$ & &\\
    $\alpha$ & $2TL$ & $T(L+1)$ & $T-1$ comparisons\\
    $\bm{P_{opt}}$ & $L$ & $L$ & \\
    $\bm{W_{opt}} = \bm W \bm{P_{opt}}$ & & $2TL$ & \\
    \hline
    Algorithm & Summations & Multiplications & Special operations \\
    \hline
    EP & $(2T-1)L$ & $ 4TL+4L$ & \\
    IM (Alg~\ref{OPT Scheme})& $ 5TL + L - T$ & $ 9TL+10L$ & $2T-2$ comparisons\\
    IM CD, IM IRC(Alg~\ref{OPT Scheme QAM})& $ 9TL + 2R_kL - T$ & $ (6T + 4L + 2R_k + 13)L$ & $2T-2$ comparisons, $L$ logs\\
    WF & $TL + 0.5L(L-1)$ & $4TL + L$ & $L$ comparisons, 1 sort \\
    \hline
    \end{tabular}\\

    \label{tab:alg_complexity}
\end{table}

In Tab.~\ref{tab:alg_complexity} computational complexity of each intermediate step of the algorithms \ref{OPT Scheme}, \ref{OPT Scheme QAM} and computational complexity of EP, IM, IM CD, IM IRC and WF algorithms are presented. For these algorithms we assume that we already calculate matrix $\bf W'$. For algorithms IM CD, IM IRC and WF we need precalculate matrices $\bm G^C$, $\bm G^{IRC}$ and $\bm S$ respectively. The difficulty of calculating some parts can be reduced. For example, when you calculating $P_{opt}$, you may not consider intersections with some hyperplanes. Note that for calculation of Alg. \ref{OPT Scheme QAM} we need to calculate matrix $\bf G$. 

The final complexity of the aforementioned algorithms is $\mathcal{O}(TL)$.  

\section{Simulation Results}\label{sec:results}

\subsection{Channel Dataset}

The datasets generated and analysed during the current study are available in the GitHub repository, \url{https://github.com/eugenbobrov/Power-Allocation-Algorithms-for-Massive-MIMO-Systems-with-Multi-Antenna-Users}

To generate channel coefficients, we use Quadriga~\cite{Quadriga}, open-source software for generating realistic radio channel impulse responses. We consider the Urban Non-Line-of-Sight~\cite{NLOS} scenarios. For each seed, we generate the random sets of user positions and compute channel matrices for the obtained configurations of users. Example of the random generation of users for Urban setup: there are two buildings, and the users are assigned to either a cluster in a building or to the ground near the building. The parameters of the experiments are listed in Table~\ref{table_example}. We describe the generation process in detail in our work~\cite{Conjugate}.



\subsection{Numerical Experiments}

We compare different PA algorithms based on RZF precoding. Primarily, the comparison involves precoding with the base power (BP) method~--- native method without PA, and the power equalization algorithm~\eqref{eq_pa}. Also, we consider some algorithms based on Karush-Kuhn-Tucker conditions~\eqref{system_KKT_2}. In Tab.~\ref{tab:pa_algorithms} algorithms with different parameters of the target optimization function, the power constraints and the starting point for intersection methods used for RZF method are presented. 

For reference we use the Power Allocation methods from the works of E. Bjornson et al., namely Equal Power (EP) and Water-Filling (WF) that are derived in assumption of Total Power Constraints (TPC). Proposed Intersection Methods (IM) are constructed to maximize Spectral Efficiency (SE) taking into account Per-Antenna Power Constraints (PAPC) and gives gains over the EP and WF methods in the specified region. Additionally, the IMs method can use WF solution as the starting point to achieve the cumulative gain in SE. This result is shown in Fig.~\ref{fig:geom_avg_se_gains}.

\begin{table}
\centering
    \caption{Review of the studied PA algorithms with their optimization function and assumed  constraints.}
    \begin{tabular}{|l|l|l|l|}
    \hline 
    Algorithm & Optimization Function & Constraints & Initialization\\
    \hline 
    EP & $\prod p_l \rightarrow \max$ & TPC & -\\
    IM & $\prod p_l \rightarrow \max$ & PAPC & EP\\
    WF & $SE(\bm G^{C}) \rightarrow \max$ & TPC & -\\
    IM CD & $SE(\bm G^C) \rightarrow \max$ & PAPC & EP\\
    IM IRC & $SE(\bm G^{IRC}) \rightarrow \max$ & PAPC & EP\\
    WF IM & $SE(\bm G^{C}) \rightarrow \max$ & PAPC & WF\\
    \hline 
    \end{tabular}
\label{tab:pa_algorithms}
\end{table}



In Figs.~\ref{fig:geom_avg_se_values} and~\ref{fig:avg_se_values_tab_1} we present an average SE~\eqref{Spectral Efficiency} from numerical simulations of the proposed Intersection Method~\eqref{OPT Scheme} IM and algorithm IM IRC with its modifications to MCS-$\beta$ model~\eqref{OPT Scheme QAM} IM CD and IM IRC and reference BP and EP methods. And in Figs.~\ref{fig:geom_avg_se_gains}-\ref{fig:avg_se_gains_tab_2} we present their gains over the reference EP method. Percentage gain means expressing the increase in SE value of the considered algorithm as a percentage compared to the baseline, in other words:
\begin{equation}
    SE \ Gain = \dfrac{SE_{considered} - SE_{baseline}}{SE_{baseline}}
\end{equation}

All Figs.~\ref{fig:geom_avg_se_values}-\ref{fig:avg_se_gains_tab_2} claim SE improvement of the proposed algorithms over the baseline EP method. Fig.~\ref{fig:geom_avg_se_gains} shows SE gain assuming Geometric Mean Effective SINR~\eqref{User SINR}, while Figs~\ref{fig:avg_se_gains_tab_1} and~\ref{fig:avg_se_gains_tab_2} assume the Exponential Averaging model~\eqref{SINR_eff}. Both the IM and IM IRC algorithms provide better power allocation (PA) under per-antenna power constraint (PAPC), which means better value of Spectral Efficiency~\eqref{Spectral Efficiency} of the obtained precoding in comparison to the BP and EP methods.

In Fig.~\ref{fig:power_cdf} we present the distribution of power allocated to different layers ($\|\bm w_l\|^2$) in case of PAPC when SU SINR is equal to 15dB. Cumulative distribution function (CDF) is calculated over transmitted layers.  
Here we see that IM majorizes both BP and EP methods in terms of power of layers (while still preserving PAPC), which is the main source of IM gains. In contrast, the WF method makes redistribution of power from UE with lower SINR to UE with higher SINR, which can be unfair and lead to blocking of cell-edge UE due to their poor contribution to the SE function. The WF IM (IM method applied to WF initial distribution) also majorizes WF and partially fixes its unfairness. 

Presented experiments claim that the proposed method IM outperforms the reference EP up to 5\% at the low \textrm{SUSINR} region ($<5$ dB) and up to 2\% at high ($>20$ dB). The modification of the algorithm IM IRC provides better results up to 6\% at the low \textrm{SUSINR} region. This is the result of better distribution of transmitted symbol powers (see example on Fig.~\ref{fig:power_distr_im_method}).

The proposed IM method in combination with widely-studied Water Filling (WF)~\cite{yu2004iterative} show a significant gain in spectral efficiency while using a similar computing time as the reference Equal Power (EP) solution (see Fig.~\ref{fig:geom_avg_se_gains}.)

The assumption that the noise-power ratio is close to zero was chosen that the Equal Power (EP) and Water-Filling (WF) method are close enough. Now we provide experiments both for EP and WF methods in Fig.~\ref{fig:geom_avg_se_gains} in a wide range of noise-power ratio. Although theoretical results stay correct only for close to zero noise-power, it helps to derive the Intersection Method (IM), which shows a good performance in a wide range of noise-power ratio.

Finally, it is experimentally proved out that the modification IM CD in case of both table 1 and table 2 MCS-$\beta$ values (see Tab.~\ref{tab:beta}) provides better results than IM. The difference in quality is clear in Gains of SE Figs.~\ref{fig:avg_se_gains_tab_1},~\ref{fig:avg_se_gains_tab_2}, which show that the performance improvement of Alg.~\ref{OPT Scheme QAM} is because of Alg.~\ref{OPT Scheme QAM} utilizes EESM Model~\ref{SINR_eff}.

\begin{figure}
    \centering
    \includegraphics[width=1\linewidth]{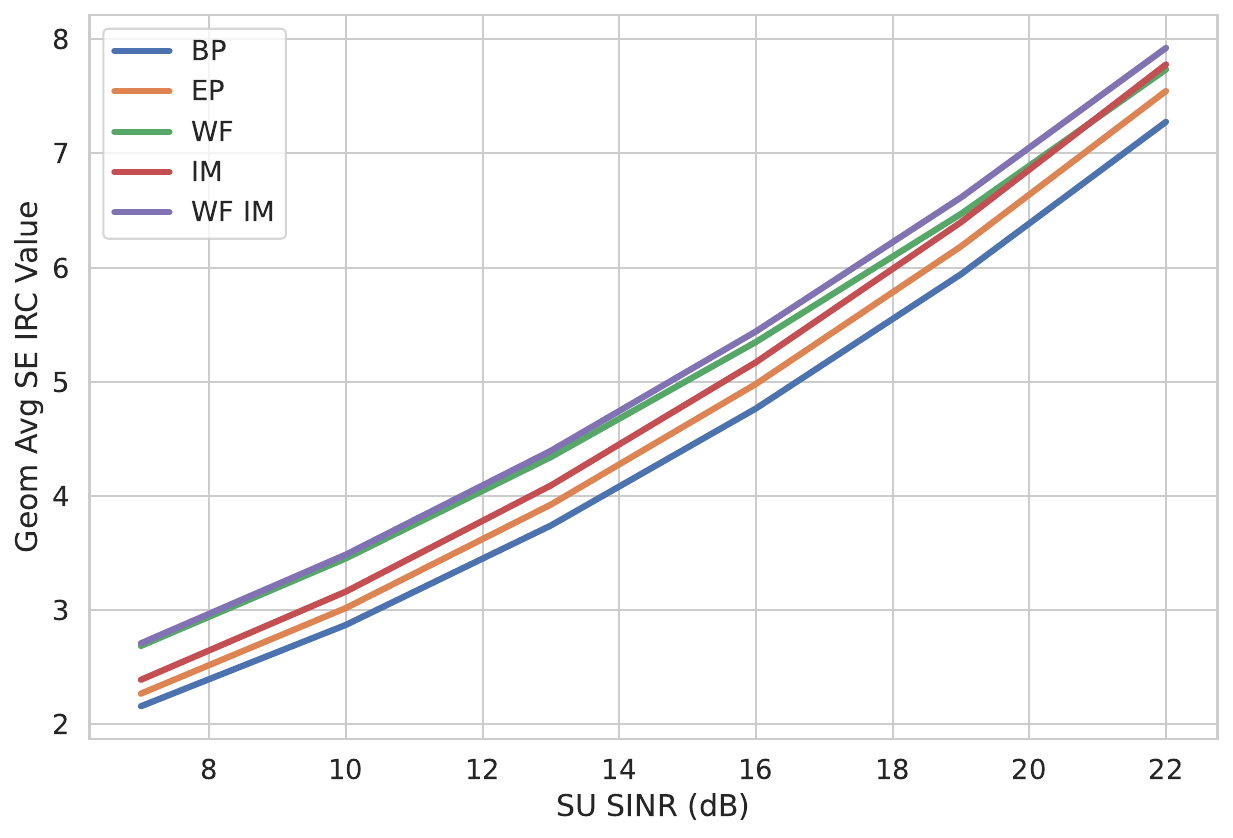}
    \caption{Average SE~\eqref{Spectral Efficiency} values Geometric Mean Effective SINR~\eqref{User SINR} and the different PA algorithms.}
    \label{fig:geom_avg_se_values}
\end{figure}

\begin{figure}
    \centering
    \includegraphics[width=\linewidth]{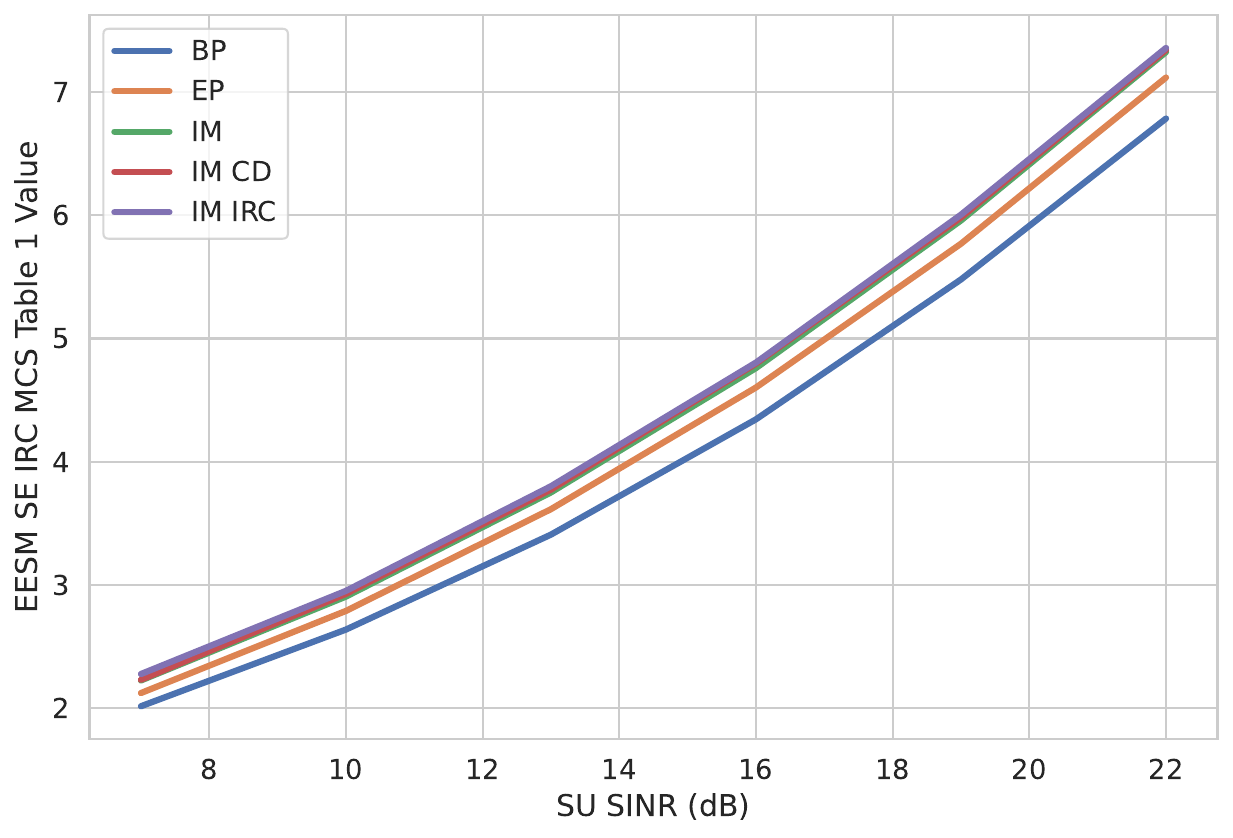}
    \caption{Average SE~\eqref{Spectral Efficiency} values using Exponential Averaging SINR~\eqref{SINR_eff} using table 1 MCS-$\beta$ values (see Tab.~\ref{tab:beta}) and the different PA algorithms. Using the table 2 MCS-$\beta$ values gives results similar to this plot.}
    \label{fig:avg_se_values_tab_1}
\end{figure}




\begin{figure}
    \centering
    \includegraphics[width=1\linewidth]{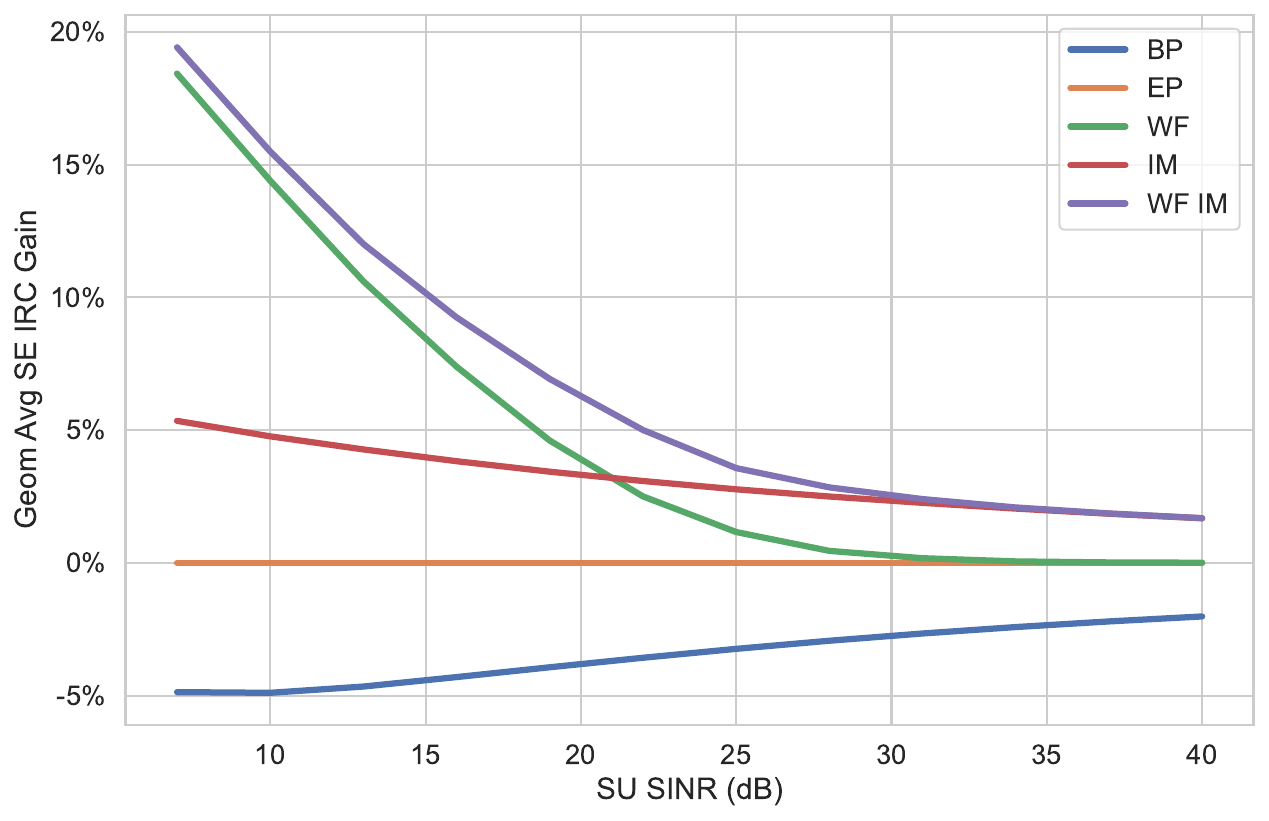}
    \caption{Average SE~\eqref{Spectral Efficiency} gains using Geometric Mean Effective SINR~\eqref{User SINR} and the different PA algorithms. } 
    \label{fig:geom_avg_se_gains}
\end{figure}


\begin{figure}
    \centering
    \includegraphics[width=\linewidth]{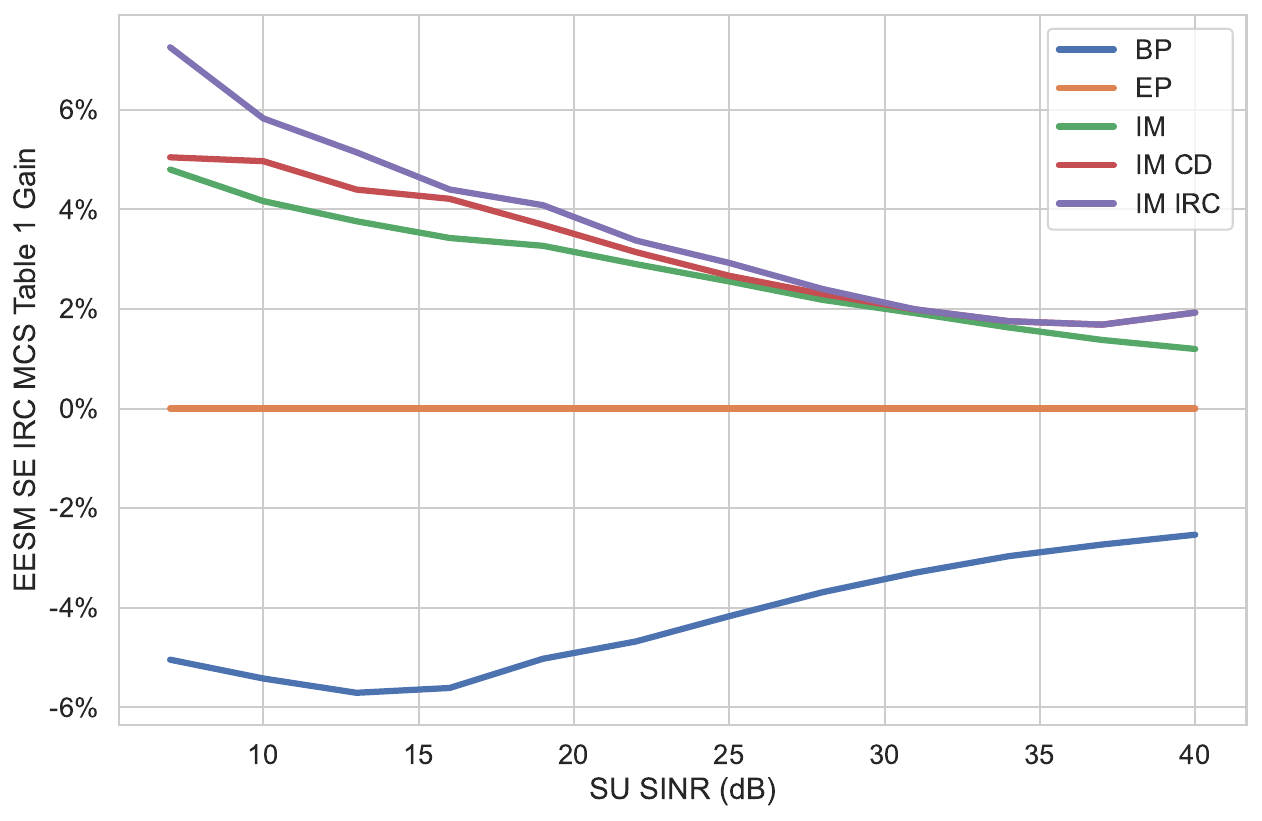}
    \caption{Average SE~\eqref{Spectral Efficiency} gains using Exponential Averaging SINR~\eqref{SINR_eff} using table 1 MCS-$\beta$ values (see Tab.~\ref{tab:beta}) and the different PA algorithms.}
    \label{fig:avg_se_gains_tab_1}
\end{figure}

\begin{figure}
    \centering
    \includegraphics[width=\linewidth]{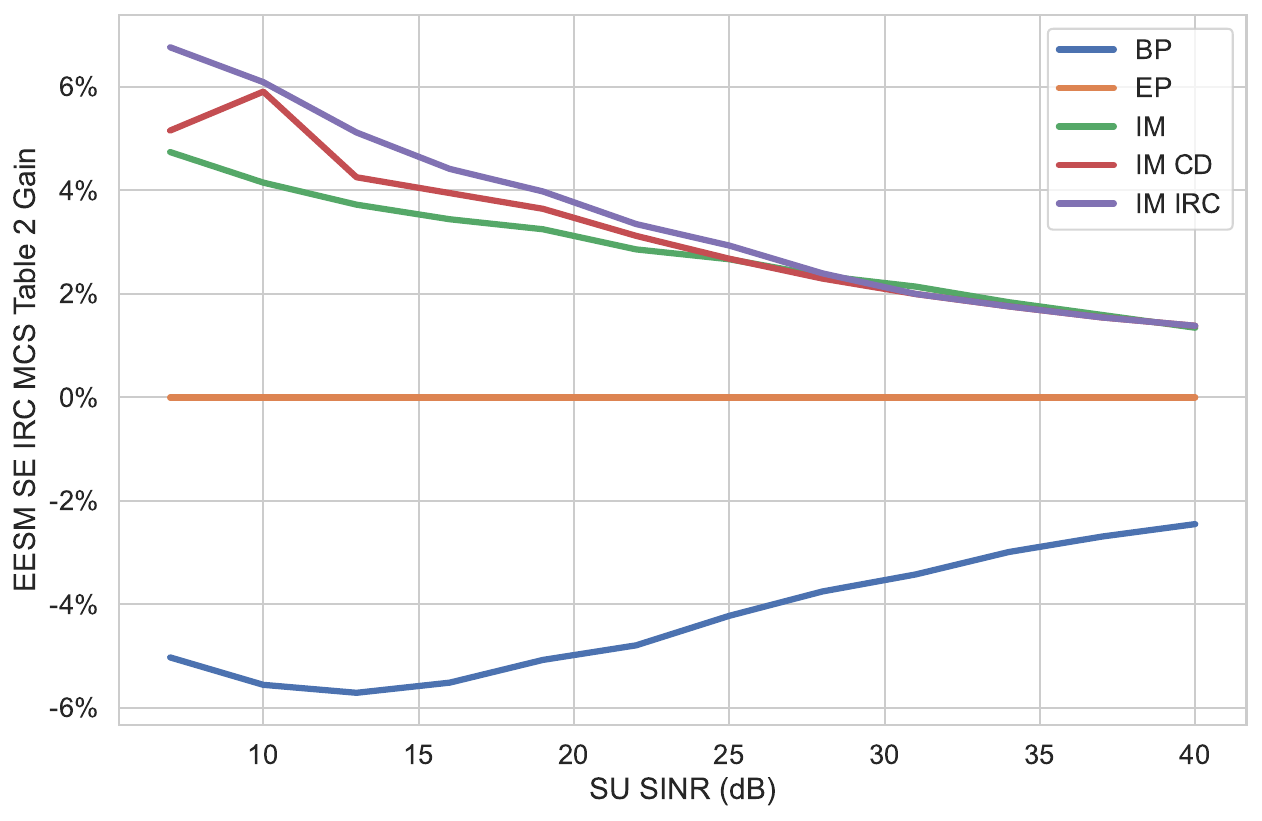}
    \caption{Average SE~\eqref{Spectral Efficiency} gains using Exponential Averaging SINR~\eqref{SINR_eff} using table 2 MCS-$\beta$ values (see Tab.~\ref{tab:beta}) and the different PA algorithms.}
    \label{fig:avg_se_gains_tab_2}
\end{figure}

\begin{figure}
    \centering
    \includegraphics[width=1\linewidth]{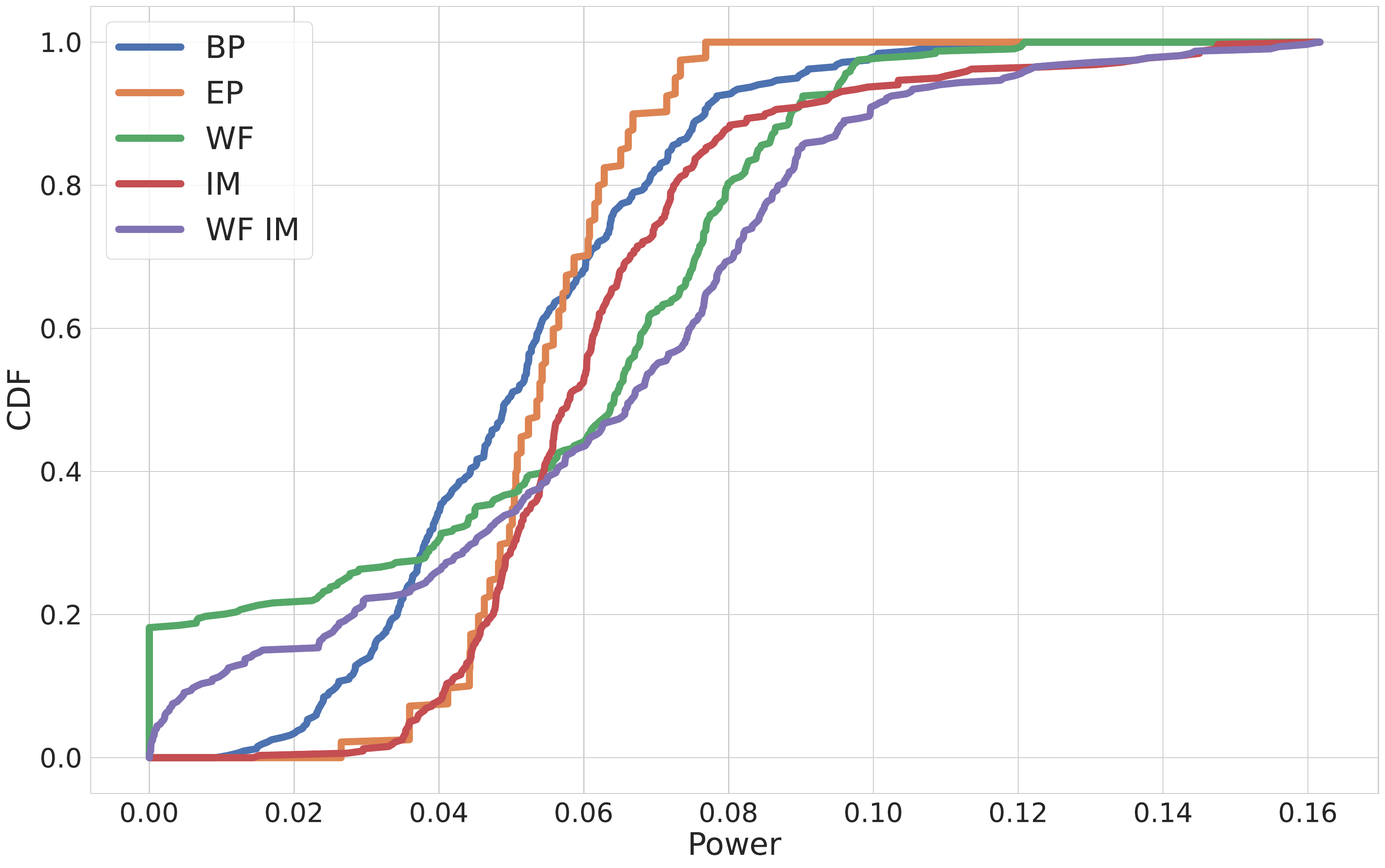}
    \caption{ The distribution of power of layers in the case of PAPC (on a set of scenarios with SU SINR = 15dB). }
    \label{fig:power_cdf}
\end{figure}




\section{Conclusions and Suggested Future Work}\label{sec:conclusion}


In this work, we study the power allocation (PA) problem of wireless MIMO systems with multi-antenna users. We simplify the initial problem using asymptotics of MMSE-IRC detection and SE function when noise and correlations are small. 
In the case of total power constraint (TPC) the simplified problem can be solved exactly and its solution is Equal Power (EP) distribution.
In the case of per-antenna power constraints (PAPC) simplified problem can be further equivalently reformulated as a Lagrange problem for which the Karush–Kuhn–Tucker conditions hold. 

Based on such analysis we propose low-complexity heuristic algorithms that provide sub-optimal solutions to the initial PA problem.
We study proposed Intersection Methods (IM)  on simulations using Quadriga and compare them with Equal Power and Water Filling reference algorithms. When simulated using Quadriga, the proposed IM methods combined with the widely studied Water Filling (WF) show a significant gain in SE using similar computational time compared to the EP baseline solution and allow improving the quality of MIMO systems in the future. Analyzing the CDF of power of layers we show that proposed IM methods majorize considered reference algorithms, provide more power under realistic Per-Antenna Power Constraints (PAPC) constraints and by this way improve Spectral Efficiency.

Since the main focus of this paper is the analytical study of the PA methods, we assume that the base station has perfect channel measurements and neglect all other potential hardware impairments. Nevertheless, the robustness of the noise to a given measurement keeps the current results asymptotically correct and can be carefully considered in future work. 
There are other possible direction of the future work. 
Firstly, future work can include a detailed study of PA algorithms, taking into account BLER performance with realistic 5G LDPC coding (e.g. using physical communication system level simulators such as Sionna~\cite{hoydis2022sionna}) rather then approximate effective SINR models such as EESM.
Secondly, the more complicated system model considering multiple RBs can be of interest. 
Thirdly, proposed IM algorithm can be perhaps further improved: improvement of SE can be realized with increasing the complexity of the algorithm, or otherwise, the complexity can be decreased with small decreasing of the SE.

\bibliographystyle{unsrt}
\bibliography{interacttfssample}

\section*{Abbreviations}
\begin{tabular}{c|c}
    ARZF & Adaptive Regularized Zero-Forcing \\
    BP & Baseline Power \\
    CD & Conjugate Detection \\
    CDF & Cumulative Density Function \\
    CSI & Channel state information \\
    EESM & Exponential Effective SINR Mapping \\
    EP & Equal Power \\
    ESM & Effective SINR Mapping \\
    IM & Intersection Method \\
    IRC & Interference Rejection Combiner \\
    LOS & Line-of-Sight \\
    MCS & Modulation and Coding Scheme \\
    MIMO & Multiple-input multiple-output  \\
    MMSE & Minimum Mean Squared Error \\
    MRT & Maximum Ratio Transmission \\
    MSE & Mean Squared Error \\
    NLOS & Non-Line-of-Sight \\
    OFDM & Orthogonal Frequency-Division Multiplexing \\
    PA & Power Allocation \\
    PAPC & Per-Antenna Power Constraints \\
    PHY & Physical Layer \\
    RZF & Regularized Zero-Forcing \\
    SE & Spectral Efficiency \\
    SINR & Signal-to-Interference-and-Noise \\
    SVD & Singular-Value-Decomposition \\
    TDD & Time division duplex \\
    TPC & Total Power Constraints \\
    UE & User equipment \\
    WF & Water Filling \\
    ZF & Zero-Forcing
\end{tabular}

\section*{Statements and Declarations}

\subsection*{Acknowledgements}
Authors are grateful to Irina Basieva, Lu Hao, Dmitri Shmelkin and Yue Zongdi for discussions and support.
Also authors appreciate valuable and constructive comments from unknown reviewers.

\subsection*{Funding}
The research was supported by Huawei Technologies.

\subsection*{Competing Interests}
The authors have no relevant financial or non-financial interests to disclose.

\subsection*{Data availability}
The datasets generated and analysed during the current study are available in the GitHub repository, \url{https://github.com/eugenbobrov/Power-Allocation-Algorithms-for-Massive-MIMO-Systems-with-Multi-Antenna-Users}

\section*{Appendix}

\subsection{Search of MCS-$\beta$ Effective SINR}

The values of $\beta$ for Modulation and Coding Scheme (MCS)~\cite{MCS} are taken from Tab.~\ref{tab:beta}. There are different $\beta$ values for different MCSes~\cite{QAM64}. The Table~\ref{tab:beta} shows $\beta$ values, which corresponds to Tables 5.1.3.1-1 to 5.1.3.1-2 in~\cite{MCSTables}. The MCS value depends on the radio quality and therefore on $\textrm{SINR}^{eff}_\beta$.





Thus, $\textrm{SINR}^{eff}_\beta$ can be found by simple iteration method on the equation~\eqref{SINR_eff}, initializing $\textrm{SINR}^{eff}_\beta$ by geometrical average using~\eqref{User SINR} and then taking $\beta = \beta (\textrm{MCS})$ from Tab.~\ref{tab:beta} and $ \textrm{MCS} = \textrm{MCS}(\textrm{SINR}^{eff}_\beta)$ from Tab.~\ref{tab:se}.

Also note that low values of $\textrm{SINR}^{eff}_\beta$ (up to~-5dB) indicate that the user is almost out of service, and high values of $\textrm{SINR}^{eff}_\beta$ (after~23dB) do not make much sense.

\begin{table}[!htb]
    \begin{minipage}{.5\linewidth}
        \centering
        \caption{Optimal $\beta$ values for each MCS.}
        \begin{tabular}{|l|l|l|}
        \hline
        MCS & $\beta$-table~1 & $\beta$-table~2 \\ \hline
        0   & 1.6    & 1.6     \\ \hline
        1   & 1.61   & 1.63    \\ \hline
        2   & 1.63   & 1.67    \\ \hline
        3   & 1.65   & 1.73    \\ \hline
        4   & 1.67   & 1.79    \\ \hline
        5   & 1.7    & 4.27    \\ \hline
        6   & 1.73   & 4.71    \\ \hline
        7   & 1.76   & 5.16    \\ \hline
        8   & 1.79   & 5.66    \\ \hline
        9   & 1.82   & 6.16    \\ \hline
        10  & 3.97   & 6.5     \\ \hline
        11  & 4.27   & 10.97   \\ \hline
        12  & 4.71   & 12.92   \\ \hline
        13  & 5.16   & 14.96   \\ \hline
        14  & 5.66   & 17.06   \\ \hline
        15  & 6.16   & 19.33   \\ \hline
        16  & 6.5    & 21.85   \\ \hline
        17  & 9.95   & 24.51   \\ \hline
        18  & 10.97  & 27.14   \\ \hline
        19  & 12.92  & 29.94   \\ \hline
        20  & 14.96  & 56.48   \\ \hline
        21  & 17.06  & 65      \\ \hline
        22  & 19.33  & 78.58   \\ \hline
        23  & 21.85  & 92.48   \\ \hline
        24  & 24.51  & 106.27  \\ \hline
        25  & 27.14  & 118.74  \\ \hline
        26  & 29.94  & 126.36  \\ \hline
        27  & 32.05  & 132.54  \\ \hline
        \end{tabular}
        \label{tab:beta}
    \end{minipage}%
    \begin{minipage}{.5\linewidth}
        \caption{Optimal SE values for each MCS.}

          \begin{tabular}{|l|c|c|}
    \hline
    MCS & \multicolumn{1}{l|}{SE-table 1} & \multicolumn{1}{l|}{SE-table 2} \\ \hline
    0   & 0.2344                          & 0.2344                          \\ \hline
    1   & 0.3066                          & 0.377                           \\ \hline
    2   & 0.377                           & 0.6016                          \\ \hline
    3   & 0.4902                          & 0.877                           \\ \hline
    4   & 0.6016                          & 1.1758                          \\ \hline
    5   & 0.7402                          & 1.4766                          \\ \hline
    6   & 0.877                           & 1.6953                          \\ \hline
    7   & 1.0273                          & 1.9141                          \\ \hline
    8   & 1.1758                          & 2.1602                          \\ \hline
    9   & 1.3262                          & 2.4063                          \\ \hline
    10  & 1.3281                          & 2.5703                          \\ \hline
    11  & 1.4766                          & 2.7305                          \\ \hline
    12  & 1.6953                          & 3.0293                          \\ \hline
    13  & 1.9141                          & 3.3223                          \\ \hline
    14  & 2.1602                          & 3.6094                          \\ \hline
    15  & 2.4063                          & 3.9023                          \\ \hline
    16  & 2.5703                          & 4.2129                          \\ \hline
    17  & 2.7305                          & 4.5234                          \\ \hline
    18  & 3.0293                          & 4.8164                          \\ \hline
    19  & 3.3223                          & 5.1152                          \\ \hline
    20  & 3.6094                          & 5.332                           \\ \hline
    21  & 3.9023                          & 5.5547                          \\ \hline
    22  & 4.2129                          & 5.8906                          \\ \hline
    23  & 4.5234                          & 6.2266                          \\ \hline
    24  & 4.8164                          & 6.5703                          \\ \hline
    25  & 5.1152                          & 6.9141                          \\ \hline
    26  & 5.332                           & 7.1602                          \\ \hline
    27  & 5.5547                          & 7.4063                          \\ \hline
    \end{tabular}
    \label{tab:se}
    \end{minipage} 
\end{table}

\subsection{Derivation of the eq.~\eqref{p_1lambd_QAM_simple}}\label{sec:appendix_derive}

From the identity~\eqref{eq:func_derivative} $\mathcal{L}^{'}_{{p}_l}=0$:
\begin{equation}\label{xl_calc}
    x_l=(1-\beta_k\ln(X_k))X_k \beta_k\sigma^2\|\bm g_l\|^2 \lambda_i\|{{\bm w'}}_{l}\|^2
\end{equation}

Taking average of~\eqref{xl_calc}:
\begin{equation} \label{Xk_calc}
    X_k=\frac{1}{L_k}\sum\limits_{l \in \mathcal{L}_k}x_l \Leftrightarrow X_k=(1-\beta_k\ln(X_k))X_k
    \frac{1}{L_k}\sum\limits_{l \in \mathcal{L}_k}\left(
    \sigma^2s_l^{-2}
    \lambda_i\|{{\bm w'}}_{l}\|^2
    \right)
\end{equation}

Dividing~\eqref{xl_calc} by~\eqref{Xk_calc} we get:
\begin{equation}
    \frac{x_l}{X_k}=\frac{\sigma^2s_l^{-2} \lambda_i\|{{\bm w'}}_{l}\|^2}{
    \frac{1}{L_k}\sum\limits_{v \in \mathcal{L}_k}\left(
    \sigma^2s_v^{-2}
    \lambda_i\|{{\bm w'}}_v\|^2
    \right)}=\frac{s_l^{-2} \|{{\bm w'}}_{l}\|^2}{
    \frac{1}{L_k}\sum\limits_{v \in \mathcal{L}_k}\left(
    s_v^{-2}
    \|{{\bm w'}}_v\|^2
    \right)}
    \label{x_l}
\end{equation}

From~\eqref{Xk_calc}:
\begin{equation}\label{fk_calc}
    X_k=\exp\left(\frac{1}{\beta_k}-\frac{1}{\beta_k
    \frac{1}{L_k}\sum\limits_{l \in \mathcal{L}_k}\left(
    \sigma^2s_l^{-2}
    \lambda_i\|{{\bm w'}}_{l}\|^2
    \right)}\right)
\end{equation}

From \eqref{x_l} and \eqref{fk_calc} we can derive:
\begin{equation}\label{x_l_final}
    x_l = \frac{s_l^{-2} \|{{\bm w'}}_{l}\|^2}{
    \frac{1}{L_k}\sum\limits_{v \in \mathcal{L}_k}\left(
    s_v^{-2}
    \|{{\bm w'}}_v\|^2
    \right)} \exp\left(\frac{1}{\beta_k}-\frac{1}{\beta_k
    \frac{1}{L_k}\sum\limits_{l \in \mathcal{L}_k}\left(
    \sigma^2s_l^{-2}
    \lambda_i\|{{\bm w'}}_{l}\|^2
    \right)}\right)
\end{equation}

Also we know that $x_l=\exp\left(-\frac{{p}_l}{\beta_l\sigma^2 s_l^{-2}}\right)$. So we know $p_l=-\beta_l\sigma^2 s_l^{-2}\ln\left(x_l\right)$ and we can substitute \eqref{x_l_final} in the $p_l$ expression.

Taking into account $\sum\limits_{l=1}^{L}(\|{{\bm w'}}_{l}\|^2{p}_l)=P$ we obtain:
\begin{multline*}
\label{1lambd_IRC}
    \sum\limits_{l=1}^{L}(\|{{\bm w'}}_{l}\|^2{p}_l)
    =-\sum\limits_{k=1}^{K}\sum\limits_{l \in \mathcal{L}_k}\left[\sigma^2s_l^{-2}\|{{\bm w'}}_{l}\|^2\left(1-\frac{1}{
    \frac{1}{L_{k}}\sum\limits_{v=1}^{L_{k}}\left(
    \sigma^2s_v^{-2}
    \lambda_i\|{{\bm w'}}_v\|^2
    \right)}\right)\right]-\\
    - \sum\limits_{l=1}^{L}\beta_k\sigma^2s_l^{-2}\|{{\bm w'}}_{l}\|^2\ln\left(\frac{s_l^{-2} \|{{\bm w'}}_{l}\|^2}{\frac{1}{L_{k}}\sum\limits_{v=1}^{L_{k}}s_v^{-2} \|{{\bm w'}}_v\|^2} \right)=\\
    =\lambda_i^{-1}L-\sum\limits_{l=1}^{L}\sigma^2s_l^{-2}\|{{\bm w'}}_{l}\|^2- \sum\limits_{l=1}^{L}\beta_k\sigma^2s_l^{-2}\|{{\bm w'}}_{l}\|^2\ln\left(\frac{s_l^{-2} \|{{\bm w'}}_{l}\|^2}{\frac{1}{L_{k}}\sum\limits_{v=1}^{L_{k}}s_v^{-2} \|{{\bm w'}}_v\|^2} \right)
    =P
\end{multline*}

\begin{equation}
    \lambda_i^{-1}= \frac{P}{L}+
    \frac{1}{L}\sum\limits_{l=1}^{L}\sigma^2s_l^{-2}\|{{\bm w'}}_{l}\|^2\left(\beta_k\ln\left(\frac{s_l^{-2} \|{{\bm w'}}_{l}\|^2}{\frac{1}{L_k}\sum\limits_{v \in \mathcal{L}_k}s_v^{-2} \|{{\bm w'}}_v\|^2} \right)+1\right)
    \label{lambd_calc}
\end{equation}

Substituting~\eqref{lambd_calc} into~\eqref{fk_calc} and~\eqref{fk_calc} into~\eqref{x_l} we get the required expressions for $x_l$ and then for $p_l$.


\end{document}